 \providecommand{\F}{\mathbb{F}}
\date{}
\newtheorem{lemma}{Lemma}
\newtheorem{theorem}{Theorem}
\newtheorem{prop}[lemma]{Proposition}
\newtheorem{defn}{Definition}
\newtheorem{rmk}{Remark}
\newtheorem{ex}{Example}
\def \mC {\mathcal{C}}
\def \mA {\mathcal{A}}
\def \mA {\mathcal{A}}
\def \mB {\mathcal{B}}
\def \mC {\mathcal{C}}
\def \mP {\mathcal{P}}
\def \mS {\mathcal{S}}
\def \Xi {{X^{[i]}}}
\newcommand{\Ga}{\alpha}
\newcommand{\Gg}{\gamma}     
\newcommand{\Gd}{\delta}
\newcommand{\poly}{\rm{poly}}
\def \bb {{\bf b}}
\def \bc {{\bf c}}
\def \bx {{\bf x}}
\def \by {{\bf y}}
\def \bv {{\bf v}}
\def \wt {{\rm wt}}
\def \mS {{\mathcal S}}
\begin{document}

\title{List Decodability of Symbol-Pair Codes}

\author{Shu Liu\thanks{ Shu Liu was with the National Key Laboratory of Science and Technology on Communications, University of Electronic Science and Technology of China, Chengdu 611731, China (email: shuliu@uestc.edu.cn)},  \; Chaoping Xing\thanks{ Chaoping Xing was with Division of Mathematical Sciences, School of Physical \&  Mathematical Sciences, Nanyang Technological University,  Singapore 637371 (email: xingcp@ntu.edu.sg)}\; and\; Chen Yuan\thanks{Chen Yuan was with Centrum Wiskunde \& Informatica, Amsterdam, Netherlands (email: Chen.Yuan@cwi.nl) Part of this work was done when the author was with the School of Physical and Mathematical Science, Nanyang Technological University, Singapore.}}

\maketitle
\begin{abstract}
We investigate the list decodability of symbol-pair codes in the present paper. Firstly, we show that list decodability of every symbol-pair code does not exceed the Gilbert-Varshamov bound. On the other hand, we are able to prove that with high probability, a random symbol-pair code can be list decoded up to the Gilbert-Varshamov bound. Our second result of this paper is to derive the Johnson-type bound, i.e., a lower bound on list decoding radius in terms of minimum distance. Finally,   we present a list decoding algorithm of Reed-Solomon codes beyond the Johnson-type bound.

\iffalse
In this paper, we show that the list decodability of the symbol-pair codes does not exceed the Gilbert-Varshamov bound regardless of alphabet size $q$. When $q$ is sufficiently large, the asymptotic Gilbert-Varshamov bound coincides with the Singleton bound for symbol-pair codes. Moreover, with high probability, a random symbol-pair codes can be list decoded up to the Gilbert-Varshamov bound with polynomial list size. When $q=\omega(1),$ list decoding radius can achieve the Singleton bound. Furthermore, by applying the existing decoding algorithms for Hamming metric codes, we present explicit constructions of symbol-pair codes that can be efficiently list decoded up to the Singleton bound.
\fi

\end{abstract}
\section{Introduction}

The high-density data storage technologies aim at designing the high-capacity storages at a relatively low cost. To achieve this goal, the theory of symbol-pair coding \cite{Y.C2011} was proposed to handle channels that output pairs of overlapping symbols, rather than one symbol at a time. Such channels, so called \textit{symbol-pair read channels}, introduce a new metric called \textit{pair distance}. It was showed that the pair error correcting capability of a code is larger than the error correcting capability of the same code in the Hamming metric. Cassuo and Litsyn~\cite{CL} gave an asymptotic lower bound on coding rates. This lower bound also indicates the existence of symbol-pair codes with higher rate than the codes in Hamming distance provided that both codes have the same relative distance.
 Chee et al.~\cite{YM2013} established a Singleton-type bound and showed the existence of symbol-pair codes meeting this bound.
 Following this direction, several works contributed to the constructions of symbol-pair codes meeting this bound~\cite{BGJ} and \cite{Kai}.

In this paper, we focus on the list decoding of symbol-pair codes.
This concept of list decoding was first introduced by Elias~\cite{P.E1957} and Wozencraft~\cite{J.M.W1958}.
Unlike the unique decoding algorithm, list decoding algorithm outputs a list of candidate codewords so as to tolerate
and correct more errors.
One of the key issues in coding theory is to explicitly construct codes with large list decoding radius.
Since there are too many works concerned with this topic, we refer the reader to \cite{thesis} for details.
Inspired by the list decoding in Hamming metric, we establish the lower bound and upper bound on the list decoding radius
of symbol-pair codes. We also reveal the differences between the codes in Hamming metric and symbol-pair metric by observing the different behaviours of the list decoding of Reed-Solomon codes in both metrics.

%We investigate the list decodability of the symbol-pair codes and then study the list decoding of random symbol-pair codes. Our results consist of three parts. Firstly, we show that the list decodability of symbol-pair codes does not exceed the Gilbet-Varshamov bound. On the other hand, when the field size $q$ is sufficiently large, the limit of list decodability of symbol-pair codes is the Singleton bound. Secondly, we show that, with high probability, a random ($\F_q$-linear) symbol-pair code can be list decoded up to the Singleton bound with $q=\omega(1).$ Thirdly, by applying the existing decoding algorithms for Hamming metric codes, one can efficiently list decode symbol-pair codes up to the Singleton bound if the alphabet size depends on $n.$

\subsection*{Previous results}
There are many works dedicated to unique decoding of symbol-pair codes.
Cassuto and Blaum \cite{Y.C2011} presented their decoding algorithm based on the error decoding algorithm in the Hamming metric. Yaakobi, Bruck and Siegel gave two constructions of effective decoding algorithms for linear cyclic codes~\cite{Yaa2012} and~\cite{Yaa2016}.
The decoding algorithm utilizing the syndrome of symbol-pair codes was proposed in~\cite{Masanori2014} by Hirotomo, Takita and Morii. They~\cite{Makoto2016} subsequently give an error-trapping decoding algorithm that is required to impose some restrictions on the pair error patterns.
There is a decoding algorithm based on linear programming designed for binary linear symbol-pair codes in~\cite{Shunsuke2016} by Horii, Matsushima and Hirasawa.
%To fully exploit the pair distance of symbol-pair code, we need to resort to the list decoding algorithm which can decode the codes beyond the unique pair distance.

\subsection*{Our results}
To the best of our knowledge, all known decoding algorithms are designed for the unique decoding of symbol-pair codes.
In this paper, we investigate the list decoding of symbol-pair codes. We first establish the Gilbert-Varshamov bound as an upper bound on the list decoding radius for all the symbol-pair codes. On the other hand, we also show that most random symbol-pair codes can be list decoded up to this bound. Then, we derive the Johnson-type bound in terms of minimum  distance which indicates that any symbol-pair codes can be list decoded beyond this bound. To show tightness of this bound, we further construct symbol-pair codes that can not be list decoded slightly beyond this bound, while it is an open problem whether there exists any Reed-Solomon code list decodable beyond the Johnson-type bound in Hamming metric. Finally, we  give an explicit  list decoding algorithm for a family of Reed-Solomon codes beyond this Johnson-type bound.

\subsection*{Organization}
This paper is organized as follows. In Section $2$, we introduce definitions of  symbol-pair codes, the Gilbert-Varshamov bound and some preliminaries on list decoding. In Section $3$, we establish an upper bound on the list decoding radius of symbol-pair codes, i.e., the Gilbert-Varshamov bound. In addition, in Section $3$ we also show that, with high probability, a random code can be list decoded up to the Gilbert-Varshamov bound. The Johnson-type bound is derived in Section $3$ as well.  In Section $4$, we present an list decoding algorithm of Reed-Solomon codes beyond the Johnson-type bound.

\section{Preliminaries}

Let $q$ be the finite field with $q$ elements, where $q$ is a power of a prime, and let $\F_q^n$ denote the set of all vectors of length $n$ over $\F_q.$ The Hamming weight of $\mathbf{x}$ is denoted by $\wt_{\sf H}(\mathbf{x}).$ A $q$-ary Hamming metric code $\bf C$ of length $n$ is a subset of $\F_q^n.$ The code $\bf C$ is called $(\tau n,L)_{\sf H}$-list decodable if for every word $\mathbf{y}\in\F_q^n,$ the intersection of $\bf C$ with the Hamming ball $\{\mathbf{x}\in\F_q^n: \wt_{\sf H}(\mathbf{x}-\mathbf{y})\leq \tau n\}$ has size at most $L,$ here the parameter $L$ is called the list size.

Then, we move to introduce the definitions of symbol-pair codes.
%%%%%%%%%%%%%%%%%%%%%%%%%%%%%%%%%%%%%%%%%%%%%%%%%%%%%%%
\begin{defn}(Symbol-pair Read Vector) Let $\mathbf{x}=[x_0, x_1, \cdots, x_{n-1}]$ be a vector in $\F_q^n.$ The symbol-pair read vector of $\mathbf{x}$ is defined as
\begin{eqnarray*}
\pi(\mathbf{x})=[(x_0,x_1), (x_1,x_2),\cdots,(x_{n-2},x_{n-1}),(x_{n-1},x_0)].
\end{eqnarray*}
%every vector $\mathbf{x}\in\F_q^n$ has a pair representation $\pi(\mathbf{x})\in\left(\F_q\times\F_q\right)^n.$
\end{defn}
%%%%%%%%%%%%%%%%%%%%%%%%%%%%%%%%%%%%%%%%%%%%%%%%%%%%%%%
The pair distance between two vectors in $\F_q^n$ is the Hamming distance between their corresponding pair vectors, where two pairs $(a,b)$ and $(c,d)$ are viewed as different if either $a\neq c$ or $b\neq d$.
\begin{defn}(Pair Distance) Let $\mathbf{x}=(x_0, x_1, \cdots, x_{n-1})$ and $\mathbf{y}=(y_0, y_1, \cdots, y_{n-1})$ be two vectors in $\F_q^n.$ The pair distance between $\mathbf{x}$ and $\mathbf{y}$ is defined as
\begin{eqnarray*}
d_{\sf P}(\mathbf{x},\mathbf{y})&=&d_{\sf H}(\pi(\mathbf{x}),\pi(\mathbf{y}))\\&=&|\{0\leq i\leq n-1: (x_i, x_{i+1})\neq (y_i, y_{i+1})\}|.
\end{eqnarray*}
\end{defn}
The pair weight of a vector $\bx\in \F_q^n$ is defined as $\wt_{\sf P}(\bx)=d_{\sf P}(\bx,\mathbf{0})$ where
$\mathbf{0}$ is the all-zero vector of $\F_q^n$.
%%%%%%%%%%%%%%%%%%%%%%%%%%%%%%%%%%
The minimum pair distance of a code $\mC\in\F_q^n$ is defined as
\begin{eqnarray*}
d_{\sf P}(\mC)=\mathop{\min}\limits_{\mathbf{x}, \mathbf{y}\in\mC, \mathbf{x}\neq\mathbf{y}}\{d_{\sf P}(\mathbf{x},\mathbf{y})\}.
\end{eqnarray*}
For $\mathbf{x},\mathbf{y}$ in $\F_q^n,$ let $0< d_{\sf H}(\mathbf{x},\mathbf{y})<n$ be the Hamming distance between $\mathbf{x}$ and $\mathbf{y}$. Then, we have
\begin{eqnarray}~\label{eq:wt}
d_{\sf H}(\mathbf{x},\mathbf{y})+1<d_{\sf P}(\mathbf{x},\mathbf{y})<2d_{\sf H}(\mathbf{x},\mathbf{y}).
\end{eqnarray}
In the extreme cases, where $d_{\sf H}(\mathbf{x},\mathbf{y})$ equals $0$ or $n,$ clearly $d_{\sf H}(\mathbf{x},\mathbf{y})=d_{\sf P}(\mathbf{x},\mathbf{y}).$

A code over $\F_q$ of length $n$ with size $M$ and minimum pair distance $d_{\sf P}$ is called an $(n,M,d_{\sf P})_q$-symbol-pair code. Similar to classical Hamming metric codes, we can define the rate and the relative pair distance of an $(n,M,d_{\sf P})_q$-symbol-pair code $\mC$ by
\begin{eqnarray*}
R(\mC)=\frac{\log_q|\mC|}{n}~~~~~~{and}~~~~~~\delta(\mC)=\frac{d_{\sf P}-2}{n},
\end{eqnarray*}
In literature, the relative distance of $\mC$ is defined by $\frac{d_{\sf P}}{n}.$ However, our definition of relative minimum distance given above will bring us advantage to handle some upper bounds like the Singleton bound.

The minimum pair distance is one of the important parameters for a symbol-pair code. A code $\mC$ with minimum pair distance $d_{\sf P}$ can uniquely correct $t$ pair errors if and only if $d_{\sf P}\geq 2t+1$ see~\cite{Y.C2011}. Hence, it is desirable to keep minimum pair distance $d_{\sf P}$ as large as possible for a symbol-pair code with fixed $n.$ It has been shown~\cite{YM2013} that an $(n,M,d_{\sf P})_q$-symbol-pair code $\mC$ must obey the following version of the Singleton bound.

\begin{lemma}(Singleton Bound) Let $q\geq 2$ and $2\leq d_{\sf P}\leq n.$ If $\mC$ is an $(n,M,d_{\sf P})_q$-symbol-pair code, then
\begin{eqnarray*}
M\leq q^{n-d_{\sf P}+2}.
\end{eqnarray*}
An alternative way to state the Singleton bound for a symbol-pair code $\mC$ in term of its rate and relative minimum pair distance is
\begin{eqnarray*}
R(\mC)+\delta(\mC)\leq 1.
\end{eqnarray*}
\end{lemma}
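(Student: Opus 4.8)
The plan is to adapt the classical puncturing proof of the Singleton bound to the symbol-pair setting. In the Hamming case one deletes $d_{\sf H}-1$ coordinates and the restriction map stays injective on the code; here I will delete a block of $d_{\sf P}-2$ \emph{consecutive} coordinates, and the point is that removing a contiguous window of $w$ coordinates can destroy pair-distance contributions coming only from a window of $w+1$ consecutive pair-positions, so a code of minimum pair distance $d_{\sf P}$ survives the deletion of $d_{\sf P}-2$ coordinates. The extra ``$+2$'' (rather than ``$+1$'') in the exponent is exactly this one-position slack inherent to the pair structure.

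First I would record an elementary description of pair distance through the Hamming support of a difference. For $\bx\neq\by$ write $S=\{i\in\ZZ/n\ZZ: x_i\neq y_i\}$. The pair $(x_i,x_{i+1})$ differs from $(y_i,y_{i+1})$ precisely when $i\in S$ or $i+1\in S$, so
\[
d_{\sf P}(\bx,\by)=\bigl|\,S\cup(S-1)\,\bigr|,
\]
where $S-1=\{s-1\bmod n: s\in S\}$ is the cyclic left shift of $S$. Consequently, if $S$ is contained in a block of $w$ consecutive residues $\{a,a+1,\dots,a+w-1\}$, then $S\cup(S-1)\subseteq\{a-1,a,\dots,a+w-1\}$ and hence $d_{\sf P}(\bx,\by)\le w+1$.

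Then I would puncture. Let $\phi:\F_q^n\to\F_q^{\,n-d_{\sf P}+2}$ be the projection onto the coordinates $0,1,\dots,n-d_{\sf P}+1$, i.e. the map deleting the last $d_{\sf P}-2$ coordinates. If $\bx,\by\in\mC$ satisfy $\phi(\bx)=\phi(\by)$, then $S\subseteq\{n-d_{\sf P}+2,\dots,n-1\}$, a window of exactly $d_{\sf P}-2$ consecutive positions; by the displayed inequality $d_{\sf P}(\bx,\by)\le d_{\sf P}-1<d_{\sf P}$, which forces $\bx=\by$ since $\mC$ has minimum pair distance $d_{\sf P}$. Thus $\phi$ is injective on $\mC$, and $M=|\mC|\le q^{\,n-d_{\sf P}+2}$. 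The reformulation $R(\mC)+\delta(\mC)\le 1$ then follows at once from $R(\mC)=\tfrac{\log_q M}{n}$ and $\delta(\mC)=\tfrac{d_{\sf P}-2}{n}$. The degenerate cases $d_{\sf P}=2$ (no coordinate deleted) and $d_{\sf P}=n$ (two coordinates kept) are covered verbatim by the same argument, using only $2\le d_{\sf P}\le n$.

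There is no deep obstacle; the only subtlety is the cyclic wrap-around in the definition of $\pi(\bx)$ (the pair $(x_{n-1},x_0)$), which is precisely why one deletes a \emph{contiguous} block of coordinates and why the shift $S-1$ must be read modulo $n$ — this is what guarantees that after puncturing the support $S$ sits inside one window rather than being split at the seam. Everything else is counting. (If one preferred to sidestep the wrap-around one could instead puncture coordinates $1,2,\dots,d_{\sf P}-2$, with identical bookkeeping.)
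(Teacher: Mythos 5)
Your proof is correct. The paper does not actually prove this lemma --- it is quoted from Chee et al.\ \cite{YM2013} without proof --- and your argument (puncturing a contiguous block of $d_{\sf P}-2$ coordinates and using the identity $d_{\sf P}(\bx,\by)=|S\cup(S-1)|$ to see that the projection remains injective on $\mC$) is precisely the standard proof of the symbol-pair Singleton bound, with the edge cases $d_{\sf P}=2$ and $d_{\sf P}=n$ handled correctly.
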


An $[n, k, d_{\sf P}]_q$ symbol-pair code is an $\F_q$-linear code over $\F_q$ of length $n$, dimension $k$ and minimum pair distance $d_{\sf P}.$

The symbol-pair ball, as an analog to the Hamming metric ball, is used to count the number of words within a given pair distance.
%%%%%%%%%%%%%%%%%%%%%%%%%%%%%%%%%
\begin{defn} (Symbol-pair Ball) For a word $\mathbf{y}\in\F_q^n$ and a nonnegative real number $r,$ the symbol-pair ball centered at $\mathbf{x}$ with radius $r$ is defined by
\begin{eqnarray*}
\mathcal{B}_{\sf P}(\mathbf{x},r)=\{\mathbf{y}\in\F_q^n: d_{\sf P}(\mathbf{x}, \mathbf{y})\leq r\}.
\end{eqnarray*}
\end{defn}
%%%%%%%%%%%%%%%%%%%%%%%%%%%%%%%%%

\begin{prop} (see in \cite{Y.C2011})
For any $\mathbf{x}\in\F_q^n$, the symbol-pair ball $\mathcal{B}_{\sf P}(\mathbf{x},d)$ has size
\begin{eqnarray}\label{eq:2.1}
|\mathcal{B}_{\sf P}(\mathbf{x},d)|=1+\sum_{i=1}^d\sum_{k=\left\lceil{\frac{i}{2}}\right\rceil}^{i-1}D(n,k,i-k)(q-1)^k,
\end{eqnarray}
where
\begin{eqnarray*}
\nonumber D(n,\ell,w)&=&{\binom{\ell-1}{w-1}}\left[{\binom{n-\ell-1}{w}}+2{\binom{n-\ell-1}{w-1}}\right]+{\binom{n-\ell-1}{w-1}}{\binom{\ell-1}{w}}\\
&=&\frac{n}{w}\cdot{\binom{\ell-1}{w-1}}{\binom{n-\ell-1}{w-1}}.
\end{eqnarray*}
\end{prop}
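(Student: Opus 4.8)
The plan is to count the words $\mathbf{y}$ at pair distance exactly $i$ from a fixed $\mathbf{x}$ by stratifying according to the underlying Hamming-support structure. Write $\mathbf{e} = \mathbf{y}-\mathbf{x}$ and let $S = \{j : e_j \neq 0\}$ be its Hamming support, with $|S| = \ell$. The key observation is that whether a coordinate $i$ of the pair vector $\pi(\mathbf{y})-\pi(\mathbf{x})$ is nonzero depends only on whether $i \in S$ or $i+1 \in S$ (indices mod $n$); that is, the pair-error locations are exactly $S \cup (S-1)$. So the pair weight $\wt_{\sf P}(\mathbf{e})$ equals $|S \cup (S-1)|$, which is a purely combinatorial function of the support $S$ viewed as a subset of the cyclic group $\ZZ/n\ZZ$. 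Translation-invariance means it does not matter that $\mathbf{x}$ is arbitrary rather than $\mathbf{0}$, which is why the ball size is independent of the center.

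Next I would parametrize $S$ by its "block structure'' around the cycle. If $S$, as a cyclic subset, splits into $w$ maximal runs (arcs) of consecutive elements, then $S \cup (S-1)$ is obtained by extending each run by one extra element to the left, so $|S \cup (S-1)| = \ell + w$ as long as $0 < \ell < n$ (the runs stay disjoint). Thus the words of Hamming weight $\ell$ whose shifted support has size $i = \ell + w$ are exactly those whose support forms $w$ runs; setting $w = i - \ell$ (equivalently, renaming $\ell \to k$ and $w \to i-k$ in the statement), I need to count the number of cyclic binary strings of length $n$ with exactly $\ell$ ones arranged in exactly $w$ runs, and then multiply by $(q-1)^\ell$ for the choices of nonzero values on $S$. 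Summing $k$ from $\lceil i/2\rceil$ to $i-1$ reflects the constraints $w \geq 1$ (support nonempty, forcing $k \leq i-1$) and $w \leq k$ (each run has at least one element, forcing $2k \geq k+w = i$).

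The combinatorial heart is the claim that the number of cyclic arrangements is exactly $D(n,\ell,w)$. I would prove this by the standard "gap method'': a cyclic arrangement of $\ell$ ones in $w$ runs, together with $n-\ell$ zeros in $n-\ell$ positions also forming some number of runs, is determined by choosing how to partition the $\ell$ ones into $w$ nonempty cyclic runs and the $n-\ell$ zeros into the gaps between them. Fixing whether position $0$ is inside a run of ones or not, or more cleanly summing over which coordinate starts a run, yields the three-term expression: one gets a factor $\binom{\ell-1}{w-1}$ for compositions of $\ell$ into $w$ parts, a factor counting compositions of the zeros into the appropriate number of gaps (either $w$ or $w+1$ gaps depending on the cyclic alignment, which produces the $\binom{n-\ell-1}{w}$, $2\binom{n-\ell-1}{w-1}$, and symmetric $\binom{n-\ell-1}{w-1}\binom{\ell-1}{w}$ terms), and then I would verify the closed form $\frac{n}{w}\binom{\ell-1}{w-1}\binom{n-\ell-1}{w-1}$ by a Vandermonde-type identity or by the "cycle lemma'' counting marked cyclic sequences. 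The main obstacle is bookkeeping the boundary/wraparound cases correctly — making sure the count of arrangements where a run of ones straddles the position-$0$/position-$(n-1)$ seam is neither double-counted nor omitted — which is exactly what forces the slightly asymmetric three-term shape before it collapses to the clean product formula; everything else is routine.
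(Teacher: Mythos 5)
The paper states this proposition without proof, merely citing \cite{Y.C2011}, so the comparison is against the standard derivation from that source: your reconstruction --- observing that the pair weight of a word with Hamming support $S$ of size $\ell$ arranged in $w$ maximal cyclic runs equals $|S\cup(S-1)|=\ell+w$, translating the summation limits into the constraints $1\le w\le \ell$, multiplying by $(q-1)^{\ell}$ for the values on the support, and counting the supports via the marked-run/cycle-lemma argument to obtain $D(n,\ell,w)=\frac{n}{w}\binom{\ell-1}{w-1}\binom{n-\ell-1}{w-1}$ --- is exactly that derivation and is correct. The one caveat concerns the statement rather than your argument: full-support words ($\ell=n$, where the extended runs merge and $|S\cup(S-1)|=n\ne \ell+w$) are not captured by the summation, so the quoted formula is exact only for $d<n$ (e.g.\ for $n=2$, $q=2$, $d=2$ it gives $3$ rather than $4$), and your explicit restriction $0<\ell<n$ is precisely what sidesteps this.
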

%%%%%%%%%%%%%%%%%%%%%%%%%%%%%%%%%
%Then, we introduce the Gilbert-Varshamov bound of the symbol-pari codes.
As in the Hamming metric, the codes in the symbol-pair metric also achieve the following Gilbert-Varshamov Bound.
%%%%%%%%%%%%%%%%%%%%%%%%%%%%%%%%%%%%%%%
\begin{lemma} (Asymptotic Gilbert-Varshamov Bound, see in \cite{CL})
There exists a family of $a$-ary $(n,M,d)$-symbol-pair codes with rate $R=\lim_{n\rightarrow\infty}\frac{\log_qM}n$ and relative pair distance $\delta=\lim_{n\rightarrow\infty}\frac d n$ satisfying
\begin{eqnarray*}
R\geq 1-\mathop{\max}\limits_{0\leq\frac{\theta}{2}\leq \beta\leq \theta\leq \delta}~\left(\beta H_q\left(\frac{2\beta-\theta}{\beta}\right)+(1-\beta)H_q\left(\frac{\theta-\beta}{1-\beta}\right)\right).
\end{eqnarray*}
\end{lemma}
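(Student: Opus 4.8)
The plan is the classical sphere‑covering (greedy) argument, fed by the exact ball‑size formula of the preceding Proposition together with a Stirling estimate. First I would record that the pair distance is translation invariant: for any $\bz\in\F_q^n$ we have $(x_i,x_{i+1})\neq(y_i,y_{i+1})$ if and only if $(x_i+z_i,x_{i+1}+z_{i+1})\neq(y_i+z_i,y_{i+1}+z_{i+1})$, so $d_{\sf P}(\bx,\by)=d_{\sf P}(\bx+\bz,\by+\bz)$ and in particular every symbol‑pair ball $\mathcal B_{\sf P}(\bx,r)$ has the same size, which I will call $V(n,r):=|\mathcal B_{\sf P}(\mathbf{0},r)|$. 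Now build a code $\mC$ greedily: start from $\mC=\emptyset$ and, as long as $\bigcup_{\bc\in\mC}\mathcal B_{\sf P}(\bc,d-1)\neq\F_q^n$, add to $\mC$ some point outside this union. When the process halts, any two codewords are at pair distance $\geq d$ (neither lies in the other's radius‑$(d-1)$ ball), so $d_{\sf P}(\mC)\geq d$, while the radius‑$(d-1)$ balls around the $M:=|\mC|$ codewords cover $\F_q^n$; hence $M\cdot V(n,d-1)\geq q^n$, i.e.
\[
R(\mC)=\frac{\log_q M}{n}\ \geq\ 1-\frac1n\log_q V(n,d-1).
\]
Letting $n\to\infty$ along lengths with $(d-2)/n\to\delta$, it remains to estimate $\frac1n\log_q V(n,d-1)$.

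For the ball size I would start from the Proposition, which (using the second expression for $D$) gives
\[
V(n,d-1)=1+\sum_{i=1}^{d-1}\sum_{k=\lceil i/2\rceil}^{i-1}\frac{n}{i-k}\binom{k-1}{i-k-1}\binom{n-k-1}{i-k-1}(q-1)^k,
\]
bound the double sum above by $n^{2}$ times its largest summand, and estimate a single summand with $i=\theta n$, $k=\beta n$ (so $w:=i-k=(\theta-\beta)n$) using $\binom{an}{bn}=q^{(1+o(1))\,an\,h_q(b/a)}$, where $h_q(x):=-x\log_q x-(1-x)\log_q(1-x)$; the polynomial prefactor $\tfrac{n}{i-k}$ and the $\pm1$'s are absorbed into the $o(1)$. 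This yields
\[
\tfrac1n\log_q\Big[\tfrac{n}{i-k}\tbinom{k-1}{i-k-1}\tbinom{n-k-1}{i-k-1}(q-1)^k\Big]
=\beta\,h_q\!\Big(\tfrac{\theta-\beta}{\beta}\Big)+(1-\beta)\,h_q\!\Big(\tfrac{\theta-\beta}{1-\beta}\Big)+\beta\log_q(q-1)+o(1).
\]
The constraints $\lceil i/2\rceil\leq k\leq i-1\leq d-2$ (which also keep both binomial arguments in $[0,1]$) become precisely $0\leq\theta/2\leq\beta\leq\theta\leq\delta$ in the limit.

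The last step is purely algebraic: using $H_q(x)=h_q(x)+x\log_q(q-1)$ together with the symmetry $h_q(x)=h_q(1-x)$ one checks
\[
\beta\,h_q\!\Big(\tfrac{\theta-\beta}{\beta}\Big)+(1-\beta)\,h_q\!\Big(\tfrac{\theta-\beta}{1-\beta}\Big)+\beta\log_q(q-1)
=\beta\,H_q\!\Big(\tfrac{2\beta-\theta}{\beta}\Big)+(1-\beta)\,H_q\!\Big(\tfrac{\theta-\beta}{1-\beta}\Big),
\]
since on the right‑hand side the coefficient of $\log_q(q-1)$ is $-(2\beta-\theta)-(\theta-\beta)+\beta=0$. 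Hence $\frac1n\log_q V(n,d-1)$ converges to $\max_{0\leq\theta/2\leq\beta\leq\theta\leq\delta}\big(\beta H_q(\tfrac{2\beta-\theta}{\beta})+(1-\beta)H_q(\tfrac{\theta-\beta}{1-\beta})\big)$ (only the upper bound on $V$ is actually needed), and substituting into the displayed lower bound for $R(\mC)$ gives the claim; since the right‑hand side of the stated inequality is non‑increasing in the relative distance, the fact that the greedy code's relative pair distance may exceed $\delta$ is harmless.

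I expect the bookkeeping of the previous paragraph to be the only genuine obstacle: one has to notice that the binomial coefficients contribute the "unsigned" entropy $h_q$ rather than the $q$‑ary entropy $H_q$, keep the separate $(q-1)^k$ factor correctly, and then invoke $h_q(x)=h_q(1-x)$ to convert $\tfrac{\theta-\beta}{\beta}$ into $\tfrac{2\beta-\theta}{\beta}$ so that the two pieces repackage as $\beta H_q(\cdot)$ and $(1-\beta)H_q(\cdot)$. Everything else — translation invariance, the greedy covering bound, and the Stirling estimate with a polynomial loss over the maximal summand — is routine. A minor caveat is that the greedy construction produces codes that need not be $\F_q$‑linear, but the statement only asserts existence of a family, so this is of no consequence.
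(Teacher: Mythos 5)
Your proof is correct. The paper itself gives no proof of this lemma---it is quoted from \cite{CL}---so there is nothing to deviate from; but your argument is the standard one and is exactly what the citation is standing in for: a greedy sphere-covering construction (valid here because the pair distance is translation invariant, so all balls have a common volume $V(n,r)$, and because a point outside every radius-$(d-1)$ ball is at pair distance at least $d$ from all chosen codewords), combined with the asymptotic ball-size estimate. That second ingredient is precisely what the paper does prove later as Lemma~\ref{lemma:limit}, by the same Stirling computation you carry out, including the step you flag as the only delicate one: the binomials contribute the unsigned entropy $h_q$, the $(q-1)^{\beta n}$ factor sits separately, and the identity $h_q(x)=h_q(1-x)$ converts $\frac{\theta-\beta}{\beta}$ into $\frac{2\beta-\theta}{\beta}$ so that the exponent repackages as $\beta H_q\bigl(\frac{2\beta-\theta}{\beta}\bigr)+(1-\beta)H_q\bigl(\frac{\theta-\beta}{1-\beta}\bigr)$; your check that the $\log_q(q-1)$ coefficients balance is right. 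Your closing remarks---that only the upper bound on $V(n,d-1)$ is needed, and that a greedy code whose relative distance overshoots $\delta$ is harmless because $\kappa_{sp}(\delta)$ is non-decreasing in $\delta$---correctly dispose of the usual loose ends.
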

%\begin{proof}
%The proof is a standard argument based on the asymptotic behaviour of the size of the symbol-pair ball.
%We omit it here.
%By Equation~\eqref{eq:2.2}, we have $|\mathcal{B}_{\sf P}(\mathbf{a}, d)|\leq n^2\cdot (\beta,\theta)_{\max}.$ To obtain asymptotic bounds on code rates it is useful to bound the ratio $q^n/|\mathcal{B}_{\sf P}(\mathbf{a}, d)|,$ such that
%\begin{eqnarray*}
%\frac{q^n}{|\mathcal{B}_{\sf P}(\mathbf{a}, d)|}\geq\frac{q^n}{(\beta,\theta)_{\max}}.
%\end{eqnarray*}
%Taking the logarithm and normalizing by $n,$ we get
%\begin{eqnarray*}
% \mathop{\lim}\limits_{n\to \infty}\frac{\log_q\frac{q^n}{|\mathcal{B}_{\sf P}(\mathbf{a}, d))|}}{n}\geq1-\mathop{\max}\limits_{0\leq\frac{\theta}{2}\leq \beta\leq \theta\leq \delta}~\left(\beta H_q\left(\frac{2\beta-\theta}{\beta}\right)+(1-\beta)H_q\left(\frac{\theta-\beta}{1-\beta}\right)\right).\end{eqnarray*}
 %\end{proof}

\begin{rmk}
 Figure~\ref{GVbound} reveals the gap between the Gilbert-Varshamov bound in symbol-pair metric and in Hamming metric when $q=17.$
 In other words, the codes attaining this bound in symbol-pair metric achieves better trade-off in terms of rate and relative distance.

\begin{figure}[H]
\centering
\includegraphics[scale=0.56]{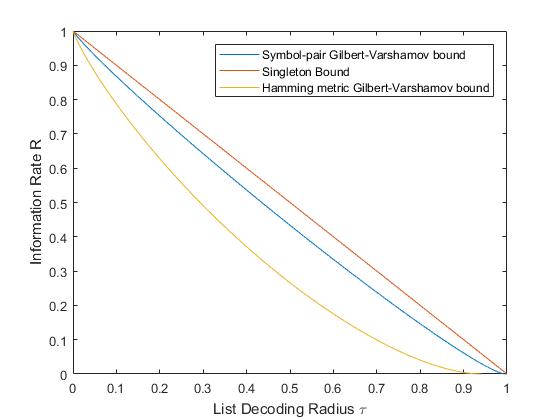}
\caption{Comparison of the Gilbert-Varshamov bound in Hamming Metric and Symbol-pair Metric.}\label{GVbound}
\centering
\end{figure}
\end{rmk}

We now proceed to the definition of list decoding of symbol-pair codes.
\begin{defn}
For a real $\tau\in(0,1),$ a symbol-pair code $\mC\subseteq\F_q^n$ is said to be $(\tau n,L)_{\sf P}$-list decodable, if for every $\mathbf{x}\in\F_q^n,$ we have
\begin{eqnarray*}
|\mathcal{B}_{\sf P}(\mathbf{x},\tau n)\cap\mC|\leq L.
\end{eqnarray*}
\end{defn}

\section{Bounds on the list decoding radius of Symbol-pair Codes}
\subsection{An upper bound on list decodbility of symbol-pair codes }

The Gilbet-Varshamov bound plays a role as an upper bound on the list decoding radius of codes under various metrics, i.e., the Hamming metric codes~\cite{Gur2005}, rank-metric codes~\cite{Ding2015} and cover-metric codes~\cite{Liu2018}.
It is not surprised that the Gilbert-Varshamov bound is also an upper bound on the list decoding radius of the symbol-pair codes.

%In this subsection, we show that any symbol-pair code cannot be list decoded beyond the Gilbert-Varshamov bound.
%We briefly state the idea behind our proof. By lower bounding the size of the symbol-pair ball, one can show that each codeword in this symbol-pair code has many nearby\footnote{Here, "nearby" means within the list decoding radius.} vectors. Averagely speaking, there must exist a vector that has many codewords nearby.

In this subsection, we show that list decoding of any symbol-pair code cannot exceed the Gilbert-Varshamov bound. The idea of our proof is based on counting the words in a symbol-pair ball. We firstly estimate the size of a symbol-pair ball.
\begin{lemma}~\label{lemma:limit}
Given a vector $\mathbf{a}\in\F_q^n$, the size of the symbol-pair ball ${\mathcal{B}_{\sf P}(\mathbf{a}, \delta n)}$ satisfies
\begin{eqnarray}~\label{eq:bounds 1}
|\mathcal{B}_{\sf P}(\mathbf{a}, \delta n)|=q^{\kappa_{sp}(\delta)n+o(n)},
\end{eqnarray}
where
\begin{eqnarray}~\label{eq:ka}
\kappa_{sp}(\delta)=\max_{{0\leq\frac{\theta}{2}\leq \beta\leq \theta\leq \delta}}\beta H_q\left(\frac{2\beta-\theta}{\beta}\right)+(1-\beta )H_q\left(\frac{\theta-\beta}{1-\beta}\right),\end{eqnarray}
and $H_q(x)=x\log_q(q-1)-x\log_qx-(1-x)\log_q(1-x)$ is the $q$-ary entropy function.
\begin{proof}
By the equation~\eqref{eq:2.1}, the size of the symbol-pair ball is
\begin{eqnarray*}
|\mathcal{B}_{\sf P}(\mathbf{a}, \delta n)|&=&1+\sum_{i=1}^{\delta n}\sum_{k=\lceil{\frac{i}{2}}\rceil}^{i-1}\frac{n}{i-k}\cdot{\binom{k-1}{i-k-1}}{\binom{n-k-1}{i-k-1}}(q-1)^k.
\end{eqnarray*}
Let $k=\beta n$ and $i=\theta n,$ for some reals $\beta\in(0,1)$ and $\theta\in(0,1),$ we have
\begin{eqnarray*}
\binom{k-1}{i-k-1}=2^{\beta nH_2(\frac{2\beta-\theta}{\beta})+o(n)}, ~\binom{n-k-1}{i-k-1}=2^{(1-\beta)nH_2(\frac{\theta-\beta}{1-\beta})+o(n)},
\end{eqnarray*}
this implies
\begin{eqnarray*}
\frac{n}{i-k}\cdot{\binom{k-1}{i-k-1}}{\binom{n-k-1}{i-k-1}}(q-1)^k=q^{\beta n H_q\left(\frac{2\beta-\theta}{\beta}\right)+(1-\beta )nH_q\left(\frac{\theta-\beta}{1-\beta}\right)+o(n)}.
\end{eqnarray*}

Thus \[q^{\kappa_{sp}(\delta)n+o(n)}\le |\mathcal{B}_{\sf P}(\mathbf{a}, \delta n)|\le (\delta n)^2q^{\kappa_{sp}(\delta)n+o(n)}=q^{\kappa_{sp}(\delta)n+o(n)}.\]
The desired result follows.
\end{proof}
\end{lemma}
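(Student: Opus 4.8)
The plan is to work directly from the exact cardinality formula~\eqref{eq:2.1}: substituting the closed form $D(n,\ell,w)=\frac{n}{w}\binom{\ell-1}{w-1}\binom{n-\ell-1}{w-1}$, it reads
\begin{eqnarray*}
|\mathcal{B}_{\sf P}(\mathbf{a},\delta n)|=1+\sum_{i=1}^{\delta n}\sum_{k=\lceil i/2\rceil}^{i-1}\frac{n}{i-k}\binom{k-1}{i-k-1}\binom{n-k-1}{i-k-1}(q-1)^k,
\end{eqnarray*}
and I would estimate each summand via Stirling's formula. Writing $k=\beta n$ and $i=\theta n$, so that $i-k=(\theta-\beta)n$, and recalling the standard estimate $\binom{an}{bn}=2^{anH_2(b/a)+o(n)}$ (uniform for $b/a$ in a compact subinterval of $(0,1)$), one applies it to the two binomial coefficients using $\frac{i-k}{k}=\frac{\theta-\beta}{\beta}$ and $\frac{i-k}{n-k}=\frac{\theta-\beta}{1-\beta}$, rewrites $H_2\big(\tfrac{\theta-\beta}{\beta}\big)=H_2\big(\tfrac{2\beta-\theta}{\beta}\big)$ via $H_2(x)=H_2(1-x)$, changes base from $2$ to $q$, and combines the resulting entropy terms with the factor $(q-1)^{\beta n}$ using $H_q(x)=x\log_q(q-1)+H_2(x)\log_q 2$; the prefactor $\frac{n}{i-k}$ disappears into $o(n)$. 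Each summand then equals
\begin{eqnarray*}
q^{\left(\beta H_q\left(\frac{2\beta-\theta}{\beta}\right)+(1-\beta)H_q\left(\frac{\theta-\beta}{1-\beta}\right)\right)n+o(n)}.
\end{eqnarray*}

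Next I would pass from the discrete double sum to the continuous optimization defining $\kappa_{sp}(\delta)$. In the scaled variables the inner range $\lceil i/2\rceil\le k\le i-1$ becomes $\tfrac{\theta}{2}\le\beta\le\theta$ and the outer range $1\le i\le\delta n$ becomes $0\le\theta\le\delta$, which are exactly the constraints in~\eqref{eq:ka}; these also force $\frac{2\beta-\theta}{\beta},\frac{\theta-\beta}{1-\beta}\in[0,1]$, so the entropy terms are well defined. For the upper bound, each of the at most $(\delta n)^2$ summands is $\le q^{\kappa_{sp}(\delta)n+o(n)}$ (the maximum of the exponent over the feasible region), hence the whole sum is $\le(\delta n)^2 q^{\kappa_{sp}(\delta)n+o(n)}=q^{\kappa_{sp}(\delta)n+o(n)}$. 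For the matching lower bound, pick $(\beta^\ast,\theta^\ast)$ attaining the maximum in~\eqref{eq:ka}, choose integers $k,i$ with $k/n\to\beta^\ast$, $i/n\to\theta^\ast$, $\lceil i/2\rceil\le k\le i-1$, and note that the single corresponding summand already has size $q^{\kappa_{sp}(\delta)n+o(n)}$. Combining the two inequalities yields~\eqref{eq:bounds 1}.

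The routine part is Stirling; the part needing care is the boundary behaviour of the extremal $(\beta^\ast,\theta^\ast)$. If it lies on the boundary of the feasible region --- $\beta^\ast\in\{0,\theta^\ast\}$, $\theta^\ast=0$, or $\theta^\ast=\delta$ --- a binomial coefficient degenerates and the uniform Stirling estimate does not apply verbatim; I would handle this by perturbing $(\beta^\ast,\theta^\ast)$ to a nearby interior point within $o(1)$, which changes the exponent by $o(1)$ since $H_q$ is continuous on $[0,1]$. I expect this interchange of the discrete sum with the continuous maximum, together with the bookkeeping identifying the limiting feasible region with $\{0\le\theta/2\le\beta\le\theta\le\delta\}$, to be the only genuinely non-mechanical step; the rest is elementary manipulation of the entropy function.
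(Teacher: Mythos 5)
Your proposal is correct and follows essentially the same route as the paper: substitute the closed form of $D(n,\ell,w)$ into~\eqref{eq:2.1}, estimate each summand by Stirling/entropy asymptotics with $k=\beta n$, $i=\theta n$, and sandwich the sum between its largest term and $(\delta n)^2$ times that term. Your extra care about uniformity of the $o(n)$ term and about boundary/degenerate values of $(\beta^\ast,\theta^\ast)$ is a refinement the paper silently omits, but it does not change the argument.
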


To simplify the notation, we denote $\kappa_{sp}(\delta)$ by $\kappa_{sp}$ if there is no confusion.

\begin{rmk}
 Lemma~\ref{lemma:limit} simply says that
\begin{eqnarray*}
\mathop{\lim}\limits_{n\to \infty}\frac{\log_q|\mathcal{B}_{\sf P}(\mathbf{a}, \delta n))|}{n}=\kappa_{sp}.
\end{eqnarray*}
\end{rmk}

The following theorem shows that the Gilbert-Varshamov bound is an upper bound on the list decoding radius of symbol-pair codes.

\begin{theorem}~\label{Thm:1}
Assume that a symbol-pair code $\mC$ of rate $R$ is $(\tau n, L)_{\sf P}$-list decodable with list size $L={\poly}(n)$. Then, the rate $R$ of $\mC$ must obey
\begin{eqnarray*}
R\leq 1-\kappa_{sp}(\tau)=1-\mathop{\max}\limits_{0\leq\frac{\theta}{2}\leq \beta\leq \tau}~\left(\beta H_q\left(\frac{2\beta-\theta}{\beta}\right)+(1-\beta)H_q\left(\frac{\theta-\beta}{1-\beta}\right)\right)
\end{eqnarray*}
for all sufficiently large $n$, where $\kappa_{sp}(\tau)$ is given in \eqref{eq:ka}.
\begin{proof}
We prove it by contradiction. Simply denote $\kappa_{sp}(\tau)$ by $\kappa_{sp}$. Assume that there exists a symbol-pair code $\mC$ of rate $R$ such that
$R\geq 1-\kappa_{sp}+\epsilon$ for some positive constant $\epsilon$.
Let $L$ be the upper bound of the list size of this code. Define the set
\begin{eqnarray*}
\mA=\{(\bc,\bv): d_{\sf P}(\bc,\bv)\leq \tau n, \bc\in \mC, \bv \in \F_q^n\}.
\end{eqnarray*}
We find two ways to calculate the size of this set.
First, for every vector $\bv$ in $\F_q^n$, it holds that
$|B_{\sf P}(\bv,\tau n)\cap \mC|\leq L$. This implies
\begin{eqnarray*}
|\mA|=\sum_{\bv\in \F_q^n}|B_{\sf P}(\bv,\tau n)\cap \mC|\leq q^nL.
\end{eqnarray*}
On the other hand, by Lemma \ref{lemma:limit} we have $|B_{\sf P}(\bc,\tau n)|\geq q^{\kappa_{sp}n-\frac{\epsilon}2 n}$ for all sufficiently large $n$. Thus

\begin{eqnarray*}
|\mA|=\sum_{\bc\in \mC}|B_{\sf P}(\bc,\tau n)|\geq q^{Rn} q^{\kappa_{sp}n-\frac{\epsilon}2n }.
\end{eqnarray*}
Combining them together gives us
\begin{eqnarray*}
L\geq q^{Rn+\kappa_{sp}n-\frac{\epsilon}2n -n}\geq  q^{\frac{\epsilon}{2}n}.
\end{eqnarray*}
A contradiction occurs.
\end{proof}
\end{theorem}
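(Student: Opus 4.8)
The plan is to prove Theorem~\ref{Thm:1} by a counting (double-averaging) argument: we compare two estimates of the size of the incidence set $\mA = \{(\bc,\bv) : d_{\sf P}(\bc,\bv)\le \tau n,\ \bc\in\mC,\ \bv\in\F_q^n\}$, one coming from the list-decodability hypothesis and one coming from the volume estimate of Lemma~\ref{lemma:limit}. Assume for contradiction that $\mC$ is $(\tau n, L)_{\sf P}$-list decodable with $L=\poly(n)$ but has rate $R \ge 1-\kappa_{sp}(\tau)+\epsilon$ for some fixed $\epsilon>0$.

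First I would upper-bound $|\mA|$ by summing over the received words $\bv\in\F_q^n$: for each $\bv$, the number of codewords within pair distance $\tau n$ is $|\mathcal{B}_{\sf P}(\bv,\tau n)\cap\mC|\le L$ by hypothesis, so $|\mA|\le q^n L$. Next I would lower-bound $|\mA|$ by summing over the codewords $\bc\in\mC$: for each $\bc$, Lemma~\ref{lemma:limit} (more precisely, the lower estimate $|\mathcal{B}_{\sf P}(\bc,\tau n)|\ge q^{\kappa_{sp}(\tau)n + o(n)}$) shows that the number of $\bv\in\F_q^n$ with $d_{\sf P}(\bc,\bv)\le\tau n$ is at least $q^{\kappa_{sp}(\tau)n - (\epsilon/2)n}$ for all sufficiently large $n$; since $|\mC|=q^{Rn}$ this gives $|\mA|\ge q^{Rn}\cdot q^{\kappa_{sp}(\tau)n-(\epsilon/2)n}$. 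Combining the two bounds yields $L\ge q^{(R+\kappa_{sp}(\tau)-1)n - (\epsilon/2)n}\ge q^{(\epsilon/2)n}$, using $R+\kappa_{sp}(\tau)-1\ge\epsilon$. This is superpolynomial, contradicting $L=\poly(n)$, and completes the proof.

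The only subtle point — and the step I would flag as the main thing to get right — is making sure the $o(n)$ error term from the volume estimate is genuinely absorbed: we need $\kappa_{sp}(\tau)n+o(n) \ge \kappa_{sp}(\tau)n - (\epsilon/2)n$ for all large $n$, which is immediate since $o(n)\le (\epsilon/2)n$ eventually, but it is worth stating that $\epsilon$ is a fixed constant chosen before $n\to\infty$. Everything else is bookkeeping: $q^nL$ beats $q^{Rn}q^{\kappa_{sp}(\tau)n-(\epsilon/2)n}$ only if $L$ is exponentially large, which is precisely the contradiction. No new ideas beyond Lemma~\ref{lemma:limit} are needed; the argument is the symbol-pair analogue of the classical Gilbert--Varshamov upper bound on list-decoding radius in the Hamming metric.
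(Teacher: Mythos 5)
Your proposal is correct and follows essentially the same double-counting argument as the paper: the same incidence set $\mA$, the same upper bound $q^nL$ from list-decodability, the same lower bound $q^{Rn}\cdot q^{\kappa_{sp}(\tau)n-(\epsilon/2)n}$ from Lemma~\ref{lemma:limit}, and the same contradiction $L\ge q^{(\epsilon/2)n}$. Your extra remark about fixing $\epsilon$ before letting $n\to\infty$ so that the $o(n)$ term is absorbed is a sensible clarification but does not change the argument.
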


%\begin{rmk}
%An alternative proof of Theorem~\ref{Thm:1}. Based on~Equation~\eqref{eq:wt}, we know $\wt_{\sf H}(\mathbf{a}-\mathbf{b})\leq \wt_{\sf P}(\mathbf{a}-\mathbf{b})$ for any word $\mathbf{a}, \mathbf{b}\in\F_q^{n}.$ Since any word that is in $\mB_{\sf H}(\mathbf{a}, \tau n)$ is also in $\mB_{\sf P}(\mathbf{a}, \tau n)$ (but not vice versa), so $|\mB_{\sf P}(\mC, \tau n)\cap \mC|\leq |\mB_{\sf H}(\mC, \tau n)\cap \mC|$ with the same center and the list decoding radius. In other words, a $(\tau n, \mL)_{\sf H}$-list decodable code $\mC$ of $\F_q^n$ with respect to the Hamming metric code is $(\tau n,\mL)^{\sf P}$-list decodable with respect to the symbol-pair code. The Gilbert-Varshamov bound was known as an upper bound on the list decoding radius of codes under Hamming metric codes in~\cite{Gur2005}. Hence, the limit of list decodability of symbol-pair codes still is the Gilbert-Varshamov bound.
%
%Note that when the field size $q=\omega(1),$ the limit of the list decodability of symbol-pair codes is the Singleton bound.
%\end{rmk}

%In next section, we will study a random symbol-pair code can be list decoded up to the Gilbert-Varshamov bound. When the field size $q$ is sufficiently large, symbol-pair codes can be list decoded up to the Singleton bound.
\subsection{List decoding of random symbol-pair codes}
In the previous subsection, we show that list decodability of every symbol-pair codes does not exceed the Gilbert-Varshamov bound.
In this subsection, we investigate list decodability of random symbol-pair codes. We show that random symbol-pair codes can be list decoded up to the Gilbert-Varshamov bound with high probability. In particular, most symbol-pair codes can be list decoded up to the Gilbert-Varshamov bound with constant list size $O(1/\epsilon)$, %The same result holds for linear symbol-pair codes but the list size grows up to $\exp(O(1/\epsilon))$.
\begin{theorem}\label{thm:random}
For small $\epsilon\in(0,1)$ with a probability at least $1-q^{-n}$, a random symbol-pair code $\mC\subseteq\F_q^n$ of rate
\begin{eqnarray*}
R= 1-\kappa_{sp}(\tau)-\epsilon=1-\mathop{\max}\limits_{0\leq\frac{\theta}{2}\leq \beta\leq \tau}~\left(\beta H_q\left(\frac{2\beta-\theta}{\beta}\right)+(1-\beta)H_q\left(\frac{\theta-\beta}{1-\beta}\right)\right)-\epsilon
\end{eqnarray*}
is $(\tau n, O(1/\epsilon))_{\sf P}$-list decodable for sufficiently large $n.$
\end{theorem}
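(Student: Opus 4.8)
The plan is to use the standard probabilistic argument for random codes, adapted to the symbol-pair metric. I would pick the code $\mC$ to be a random subset of $\F_q^n$ of size $M = q^{Rn}$, obtained by choosing $M$ codewords $\bc_1,\dots,\bc_M$ independently and uniformly at random (or, equivalently for the purposes of a union bound, choosing them one at a time). The goal is to show that with probability at least $1-q^{-n}$ there is no center $\bv\in\F_q^n$ and no subset $S$ of $L+1$ codewords all lying inside $\mathcal{B}_{\sf P}(\bv,\tau n)$, where $L = O(1/\epsilon)$; this is exactly the statement that $\mC$ is $(\tau n, L)_{\sf P}$-list decodable.

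First I would fix a center $\bv$ and a fixed $(L+1)$-subset of the index set $\{1,\dots,M\}$. For each index $j$ in the subset, the event $\bc_j\in\mathcal{B}_{\sf P}(\bv,\tau n)$ has probability $|\mathcal{B}_{\sf P}(\bv,\tau n)|/q^n = q^{(\kappa_{sp}(\tau)-1)n+o(n)}$ by Lemma~\ref{lemma:limit}. Since the codewords are independent, the probability that all $L+1$ of them land in the ball is $q^{(L+1)(\kappa_{sp}(\tau)-1)n+o(n)}$. Then I would take a union bound over the $q^n$ choices of $\bv$ and the at most $M^{L+1} = q^{R(L+1)n}$ choices of $(L+1)$-subset, giving a failure probability at most
\begin{eqnarray*}
q^{n}\cdot q^{R(L+1)n}\cdot q^{(L+1)(\kappa_{sp}(\tau)-1)n+o(n)} = q^{n + (L+1)(R+\kappa_{sp}(\tau)-1)n + o(n)}.
\end{eqnarray*}
Substituting $R = 1-\kappa_{sp}(\tau)-\epsilon$ makes the exponent $n - (L+1)\epsilon n + o(n)$, which is at most $-n$ once $(L+1)\epsilon \ge 2$, i.e. $L = \Theta(1/\epsilon)$, and $n$ is large enough to absorb the $o(n)$ term. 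This yields the claimed probability bound and list size.

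One technical point worth spelling out carefully is that a uniformly random choice of $M$ codewords may produce repetitions, so strictly speaking one gets a random multiset; the standard fix is either to note that repetitions only help (a repeated codeword does not increase the number of distinct codewords in any ball, and with overwhelming probability there are at most $n$ of them, which can be discarded while only changing the rate by $o(1)$), or to argue directly over ordered tuples of distinct codewords — either way the union-bound count above is unaffected up to $q^{o(n)}$ factors. The main obstacle, and the only place real care is needed, is making the $o(n)$ bookkeeping in Lemma~\ref{lemma:limit} uniform: the lemma gives $|\mathcal{B}_{\sf P}(\bv,\tau n)| = q^{\kappa_{sp}(\tau)n + o(n)}$ with a bound independent of $\bv$ (the ball size depends only on $\tau$ and $n$, not on the center), so the uniformity is automatic, but one must ensure that when this $o(n)$ is multiplied by $L+1 = O(1/\epsilon)$ it is still $o(n)$ — which it is, since $\epsilon$ is a fixed constant while $n\to\infty$. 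After that, the argument is a routine first-moment computation.
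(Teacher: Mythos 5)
Your proposal is correct and is essentially the paper's own proof: a first-moment/union-bound argument over all $q^n$ centers and all $(L+1)$-subsets of codewords, using Lemma~\ref{lemma:limit} to bound the ball volume by $q^{\kappa_{sp}(\tau)n+o(n)}$. The only difference is bookkeeping of the subexponential factor --- the paper fixes $L=\lceil 4/\epsilon\rceil-1$ and absorbs the $o(n)$ into a $q^{\epsilon n/2}$ slack so the final exponent is $-(L+1)\epsilon n/4\le -n$, whereas your condition $(L+1)\epsilon\ge 2$ should be taken with strict room to spare (say $(L+1)\epsilon\ge 3$), since at equality the exponent is only $-n+o(n)$, which need not be $\le -n$; this does not affect the claimed $L=O(1/\epsilon)$.
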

\begin{proof}
 Put $L=\left\lceil\frac{4}{\epsilon}\right\rceil-1.$ By Lemma~\ref{lemma:limit}, for all sufficiently large $n,$ we have $|\mathcal{B}_{\sf P}(\mathbf{a},\tau n)|\leq q^{\kappa_{sp} n+\frac{\epsilon}{2}n}.$  Pick a symbol-pair code $\mC$ with size $q^{Rn}$ uniformly at random. Let us upper bound the probability that $\mC$ is not $(\tau n, L)_{\sf P}$-list decodable.

If $\mC$ is not $(\tau n, L)_{\sf P}$-list decodable, there exists a word $\mathbf{a}\in \mathbb{F}_{q}^{n}$ and a subset $\mS\subseteq \mC$ with $|\mS|=L+1$ such that $\mS\subseteq \mB_{{\sf P}}(\mathbf{a},\tau n)$.
The probability that codeword $\mathbf{c}\in \mC$ is contained in $\mB_{{\sf P}}(\mathbf{a},\tau n)$ is
 \begin{equation}\label{eq:2.3}
 \Pr[c\in \mB_{{\sf P}}(\mathbf{a},  \tau n)]=\frac{|\mB_{{\sf P}}(\mathbf{a}, \tau n)|}{q^{n}}
\leq  q^{\kappa_{sp}n+\frac{\epsilon}{2}n}\cdot q^{-n}.
 \end{equation}
 Let $E_{\mathbf{a},\mS}$ be the event that all codewords in $\mS$ are contained in $\mB_{{\sf P}}(\mathbf{a}, \tau n)$.
By Equation~\eqref{eq:2.3}, we have
\begin{equation*}
\Pr[E_{\mathbf{a},S}]\leq\left(\frac{|\mB_{{\sf P}}(\mathbf{a},  \tau n)|}{q^{n}}\right)^{L+1}\leq {\left( q^{\kappa_{sp}n+\frac{\epsilon}{2}n}\cdot q^{-n}\right)}^{L+1}.
\end{equation*}
Taking the union bound over all $q^{n}$ choices of $\mathbf{a}$ and $\mS$ over any $(L+1)$-subsets of $\mC$, we have
\begin{eqnarray*}
\nonumber\sum_{\mathbf{a},\mS}\Pr[E_{\mathbf{a},\mS}]&\leq& q^{n}\cdot\binom{|\mC|}{L+1}\cdot{\left( q^{\kappa_{sp}n+\frac{\epsilon}{2}n}\cdot q^{-n}\right)}^{L+1}\\
\nonumber&\leq&q^{n}\cdot|\mC|^{L+1}\cdot q^{(\kappa_{sp}n+\frac{\epsilon}{2}n)(L+1)}\cdot q^{-n(L+1)}\\
\nonumber&\leq&q^{n}\cdot q^{Rn(L+1)}\cdot  q^{(\kappa_{sp}n+\frac{\epsilon}{2}n-n)(L+1)} \\
\nonumber&=&q^{n(L+1)\left(\frac{1}{L+1}+R+\kappa_{sp}+\frac{\epsilon}{2}-1\right)}\\
\nonumber&\leq&q^{n(L+1)\left(\frac{\epsilon}{4}+R+\kappa_{sp}+\frac{\epsilon}{2}-1\right)}\nonumber\leq q^{-n}.\\
\end{eqnarray*}
The last inequality holds since $R= 1-\kappa_{sp}-\epsilon.$ Thus, a symbol-pair code $\mC$ with rate $R$ is not $(\tau n,L)_{\sf P}$-list decodable with probability at most $q^{-n}$.
\end{proof}

\subsection{The Johnson-type bound}
The Johnson-type bound in the topic of list decoding usually provides a lower bound on list decoding radius in terms of minimum distance of a code. However, for some metrics such as rank-metric, the Johnson-type bound does not exist. In this section, we show that one has a  Johnson-type bound for pair metric. On the hand hand, there is an evidence showing that the Johnson-type bound given in this subsection is tight.

%By the definition of the Johnson-type bound, our goal is to prove a lower bound on the list decoding radius of the symbol-pair code, i.e., any symbol-pair code can be list decodable up to this bound.
\begin{theorem}(Johnson-type Bound)\label{thm:JT}
Any symbol-pair code $\mC$ in $\F_q^n$ with  relative distance $\delta$ is $(\tau n, 2(q^2-1)nd)$-list decodable for
$$\tau=\frac{q^2-1}{q^2}\left(1-\sqrt{1-\frac{q^2\delta}{q^2-1}}\right)$$
\end{theorem}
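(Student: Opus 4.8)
The plan is to mimic the classical Johnson bound argument in the Hamming metric, but carried out at the level of pair-read vectors. Recall that the symbol-pair distance is literally a Hamming distance: $d_{\sf P}(\bx,\by)=d_{\sf H}(\pi(\bx),\pi(\by))$, where $\pi$ maps $\F_q^n$ into $(\F_q^2)^n$. So I would first transport the problem into the larger alphabet $Q=\F_q^2$ of size $q^2$. Fix a received word $\by\in\F_q^n$ and suppose $\bc_1,\dots,\bc_M\in\mC$ all lie in $\mathcal B_{\sf P}(\by,\tau n)$; we want to bound $M$. Pass to the pair-read images $\pi(\bc_1),\dots,\pi(\bc_M)$ and $\pi(\by)$, which are vectors of length $n$ over an alphabet of size $q^2$, and run the standard geometric Johnson-bound computation: embed each alphabet symbol as a unit vector in $\RR^{q^2}$ (or use the inner-product/averaging trick), write $e_i=d_{\sf P}(\bc_i,\by)\le\tau n$ and note that any two codewords satisfy $d_{\sf P}(\bc_i,\bc_j)\ge\delta n+2\ge\delta n$. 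The key pairwise inequality to exploit is that the pair-read images of distinct codewords are far apart, so a second-moment / Plotkin-type count of agreements forces $M$ to be small unless $\tau$ is below the Johnson radius. Solving the resulting quadratic in $\tau$ with alphabet size $q^2$ in place of $q$ is exactly what produces
$\tau=\frac{q^2-1}{q^2}\bigl(1-\sqrt{1-\frac{q^2\delta}{q^2-1}}\bigr)$,
since the $q$-ary Johnson radius with parameter $\delta'$ is $\frac{q-1}{q}(1-\sqrt{1-\frac{q\delta'}{q-1}})$ and here the effective relative distance among pair vectors is $\ge\delta$.

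The main subtlety — and the place I expect the real work to be — is that the images $\pi(\bc_i)$ do \emph{not} form an arbitrary code over $\F_q^2$: only $n$ of the $(q^2)^n$ possible pair-vectors are in the image of $\pi$, and the "second coordinate of position $i$ equals the first coordinate of position $i+1$" constraint correlates the coordinates. For the Johnson bound this is harmless in one direction (a constrained code is still a code, so the generic $q^2$-ary Johnson bound applies verbatim once we know $d_{\sf P}(\bc_i,\bc_j)\ge\delta n$), but I would double-check that the list-size bound we extract, $2(q^2-1)nd$, is the one the generic argument gives — the generic Johnson bound over alphabet size $Q$ yields a list size on the order of $Q\cdot n\cdot d$ when $\tau$ is the Johnson radius itself (boundary case), which matches $2(q^2-1)nd$ up to the constant, the factor $2$ absorbing the boundary slack. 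So I would state the $Q$-ary Johnson bound as a black-box lemma (list size $\le \mathrm{const}\cdot Q\,n\,d$ at the Johnson radius), instantiate $Q=q^2$, and identify $d$ here with the minimum pair distance $d_{\sf P}$ (note $\delta n=d_{\sf P}-2$, so $d=\delta n+2$).

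Concretely the steps are: (1) fix $\by$ and the candidate list $\bc_1,\dots,\bc_M$; (2) for each ordered pair $(i,j)$ with $i\ne j$ and each position $t$, record whether $\pi(\bc_i)_t=\pi(\bc_j)_t$, and set up the averaging inequality $\sum_{i<j} a_{ij} \le \binom{M}{2}\cdot(\text{something involving }\tau)$ versus $a_{ij}\le n-\delta n$ from the distance bound; (3) introduce the "popular symbol" count at each coordinate and apply Cauchy–Schwarz / convexity exactly as in the Hamming Johnson proof but with $q^2$ symbols; (4) conclude that either $M$ is at most the claimed $2(q^2-1)nd$, or else $\tau$ must exceed the stated Johnson radius, contradiction. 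The only real obstacle is bookkeeping the $q^2$ versus $q$ substitution and checking that the relative distance normalization $\delta=(d_{\sf P}-2)/n$ used in the paper lines up with the $\delta'$ appearing inside the square root; I would verify this on the boundary case $\delta=\frac{q^2-1}{q^2}$ where $\tau$ should equal $\frac{q^2-1}{q^2}$, which it does.
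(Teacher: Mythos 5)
Your proposal is correct and is essentially the paper's own argument: the paper fixes $\by$, passes to the pair-read vectors of the differences $\bc_i-\by$ (i.e., works over the alphabet $\F_q^2$ with distinguished symbol $(0,0)$), and runs exactly the second-moment/Cauchy--Schwarz Johnson computation you describe in step (3), arriving at the same quadratic in the agreement count and the list size $2(q^2-1)nd$. The only difference is presentational: the paper unrolls the $q^2$-ary Johnson calculation explicitly instead of citing it as a black-box lemma.
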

\begin{proof}
We fix a vector $\by\in \F_q^n$. Assume that $B_{\sf P}(\by,\tau n)\cap \mC=\{\bc_1,\ldots,\bc_L\}$ for $L$. Our goal is to bound the size $L$. Let $\bv_i=\bc_i-\by$. Since $d_{\sf P}\geq d_{\sf P}(\mC)$, we have $d_{\sf P}(\bv_i,\bv_j)=d_{\sf P}(\bc_i,\bc_j)\geq \delta n$
for every pair $(i,j)$ and $\wt_{\sf P}(\bv_i)\leq \tau n$ for every $i$. We denote $\bv_i$ as $(v_{i,1},\ldots,v_{i,n})\in \F_q^n$.
By the definition of symbol-pair error, we have
\begin{eqnarray*}
\frac{L(L-1)\delta n}{2}&\leq&\sum_{1\leq i<j\leq L}d_{\sf P}(\bv_i,\bv_j)=\sum_{1\leq i<j\leq L}|\{k: (v_{i,k},v_{i,k+1})
\neq (v_{j,k},v_{j,k+1})\}|\\&=&\sum_{k=1}^n|\{(i,j):(v_{i,k},v_{i,k+1})\neq (v_{j,k},v_{j,k+1}), 1\leq i<j\leq L\}|.
\end{eqnarray*}
Next, we fix the coordinate pair $(1,2)$. Let $x_{a,b}$ be the number of pairs $(a,b)$ among the set $\{(v_{i,1},v_{i,2})\in \F_q^2:1\leq i\leq L\}$. It is clear that $\sum_{(a,b)\in \F_q^2}x_{a,b}=L$. It follows that
\begin{eqnarray*}
&&|\{(i,j):(v_{i,1},v_{i,2})\neq (v_{j,1},v_{j,2}), 1\leq i<j\leq L\}|=
\sum_{(a,b)\in \F_q^2}x_{a,b}(L-x_{a,b})\\&=&\left(L^2-x^2_{0,0}-\sum_{(a,b)\in \F_q^2/(0,0)}x^2_{a,b}\right)
\leq \left(L^2-x^2_{0,0}-\frac{1}{q^2-1}\left(\sum_{(a,b)\in \F_q^2/(0,0)}x_{a,b}\right)\right)\\&=&\left(L^2-x^2_{0,0}-\frac{1}{q^2-1}(L-x_{0,0})^2\right)
\end{eqnarray*}
The inequality above is due to the Cauthy-Schwarz inequality.
We can apply this argument to every pair of adjacent coordinates $(k,k+1)$.
Let $a_k$ be the number of pairs $(0,0)$ among the set $\{(v_{i,k},v_{i,k+1})\in \F_q^2:1\leq i\leq L\}$.
Putting these two formulas together gives us
\begin{eqnarray*}
\frac{L(L-1)\delta n}{2}&\leq& nL^2-\sum_{k=1}^n \left(a^2_k+\frac{1}{q^2-1}(L-a_k)^2\right)\\
&=& \frac{2}{q^2-1}L\sum_{k=1}^n a_k-\frac{q^2}{q^2-1}\sum_{k=1}^n a^2_k+n\frac{q^2-2}{q^2-1}L^2\\&\leq&
-\frac{q^2}{n(q^2-1)}\left(\sum_{k=1}^n a_k\right)^2+\frac{2}{q^2-1}L\left(\sum_{k=1}^n a_k\right)+n\frac{q^2-2}{q^2-1}L^2
\end{eqnarray*}
Let $\sum_{k=1}^n a_k=Le$ and we then have
$$
-\frac{q^2}{n(q^2-1)}L^2e^2+\frac{2}{q^2-1}L^2e-\frac{L(L-1)\delta n}{2}+n\frac{q^2-2}{q^2-1}L^2\geq 0.
$$
This implies
\begin{equation}\label{eq:listsize}
L\leq \frac{2\delta n}{\frac{q^2e^2}{n(q^2-1)}-\frac{2e}{q^2-1}+\delta n-\frac{(q^2-2)n}{q^2-1}}.
\end{equation}
The condition $\frac{q^2e^2}{n(q^2-1)}-\frac{2e}{q^2-1}+\delta n-\frac{(q^2-2)n}{q^2-1}>0$ leads to
$$
\frac{e}{n}<\frac{1}{q^2}+\frac{q^2-1}{q^2}\sqrt{1-\frac{q^2\delta}{q^2-1}}.
$$
This implies
$$(n-q^2e)>(q^2-1)n\sqrt{1-\frac{q^2\delta}{q^2-1}}.$$
Squaring both sides and observing that $\delta=\frac{d}{n}$ yields
$$
(n-q^2e)^2>(q^2-1)^2n^2-(q^2-1)q^2nd.
$$
Since both sides are integers, we obtain $(n-q^2e)^2\geq(q^2-1)^2n^2-(q^2-1)q^2nd+1.$
Observe that $\eqref{eq:listsize}$ is equivalent to
$$
L\leq \frac{ 2(q^2-1)dn}{(n-q^2e)^2-(q^2-1)^2n^2+(q^2-1)q^2nd}\leq 2(q^2-1)dn.
$$
Then, the desired result follows.
\end{proof}
%\begin{rmk}
%This Johnson-type bound has exactly the same form as the Johnson bound for Hamming metric codes over $\F_{q^2}$.
%\end{rmk}

%In Hamming distance, there are many works dedicated to designing codes that are list decodable beyond the Johnson bound.
%Most of them come from the subcodes of the Reed-Solomon code. One might conjecture that the same construction might yield the symbol-pair code with poor list decoding radius as well. However, \eqref{eq:wt} says that the %symbol-pair distance could be twice as the Hamming distance.
%This implies that the code that has poor list decoding radius in Hamming metric is not necessarily a symbol-pair code with poor list decoding radius. Thus, we must learn the structure of these subcodes and hopefully transform them to symbol-pair codes that still have relatively poor list decoding radius.
One may wonder if the Johnson-type Bound derived in this subsection is optimal.
We find that the codes in~\cite{BKR} can be used to illustrate that the Johnson-type bound derived in this subsection is at least very close to optimality though we do not have an affirmative answer.

The paper \cite{BKR} focused on the low-degree linearized polynomials that agrees with a given high-degree linearized polynomials on many coordinates. The following lemma summarize their results. Fix $n$ distinct elements $\Ga_1,\dots,\Ga_n$. For a polynomial $f(x)\in\F_q[x]$, we denote by $\bc_f$ the vector $(f(\Ga_1),\dots,f(\Ga_n))$.  We abuse notations and denote by $d_{\sf P}(a(x),b(x))$ (and $d_{\sf H}(a(x),b(x))$, respectively) the symbol-pair distance (and the Hamming distance, respectively) between  $\bc_a$ and $\bc_b$.
% i.e.,
%\begin{eqnarray*}
%&&d_{\sf H}(a(x),b(x))=d_{\sf H}((a(\alpha_1),\ldots,a(\alpha_n),(b(\alpha_1),\ldots,b(\alpha_n)),\\
%&&d_{\sf P}(a(x),b(x))=d_{\sf P}((a(\alpha_1),\ldots,a(\alpha_n),(b(\alpha_1),\ldots,b(\alpha_n)),
%\end{eqnarray*}
%where $a(\alpha_1),\ldots,a(\alpha_n)$ and $b(\alpha_1),\ldots,b(\alpha_n)$ are the evaluations of $a(x)$ and $b(x)$, respectively.
\begin{lemma}[{\cite[Theorem 2.1]{BKR}}]\label{thm:ham}
Let $\ell$ be a prime power and $m$ a positive integer. Put $q=\ell^m$. Let $u$ and $v$ be integers such that $0\leq u\leq v\leq m$. Then,
there is a family $\mP\subseteq \F_{\ell^m}[X]$ of \emph{linearized}\footnote{They did not mention ''linearized'' in this theorem. Judged from their construction, $\mP$ is indeed a family of linearized polynomials.} polynomials of degree $\ell^u$ and a linearized polynomial $w(x)$ such that
\begin{enumerate}
\item $|\mP|\geq \ell^{(u+1)m-v^2}$;
\item for all $P(x)\in \mP$, $d_{\sf H}(P(x),w(x))\leq \ell^m-\ell^v$;
\item $w(x)=x^{\ell^v}+\sum_{i=u+1}^{v-1}a_ix^{\ell^i}$.
\end{enumerate}
\end{lemma}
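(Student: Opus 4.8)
This lemma is quoted from \cite{BKR}, so one option is simply to invoke it; for completeness, the natural way to re-derive it is a pigeonhole count built on the theory of $\F_\ell$-linearized polynomials over $\F_{\ell^m}$. The plan is to use the standard facts that the set of roots in $\F_{\ell^m}$ of a linearized polynomial is an $\F_\ell$-subspace, that for every $\F_\ell$-subspace $V\subseteq\F_{\ell^m}$ the subspace polynomial $L_V(x):=\prod_{\alpha\in V}(x-\alpha)$ is a \emph{monic} linearized polynomial of $\ell$-degree $\dim V$, and that $V\mapsto L_V$ is a bijection from $v$-dimensional subspaces onto the monic linearized polynomials with leading term $x^{\ell^v}$. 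Writing $L_V(x)=x^{\ell^v}+\sum_{i=0}^{v-1}b_i(V)\,x^{\ell^i}$, I would let $\mP$ consist of the ``low-degree pieces'' $-\sum_{i=0}^{u}b_i(V)x^{\ell^i}$, as $V$ ranges over a large family of $v$-dimensional subspaces that all share the same ``middle coefficients'' $b_{u+1}(V),\dots,b_{v-1}(V)$.

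To build that family, I would classify the $v$-dimensional $\F_\ell$-subspaces of $\F_{\ell^m}$ according to the tuple $\big(b_{u+1}(V),\dots,b_{v-1}(V)\big)\in(\F_{\ell^m})^{v-1-u}$. The number of these subspaces is the Gaussian binomial coefficient $\binom{m}{v}_{\ell}$, which is at least $\ell^{v(m-v)}$ (each factor $\tfrac{\ell^{m-i}-1}{\ell^{v-i}-1}\ge\ell^{m-v}$), while the number of possible tuples is $\ell^{m(v-1-u)}$. Hence by pigeonhole some tuple $(a_{u+1},\dots,a_{v-1})$ is shared by a family $\mathcal{V}$ of at least
\[
\ell^{v(m-v)-m(v-1-u)}=\ell^{(u+1)m-v^2}
\]
subspaces. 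Then I would set $w(x):=x^{\ell^v}+\sum_{i=u+1}^{v-1}a_i x^{\ell^i}$ and, for each $V\in\mathcal{V}$, set $P_V(x):=w(x)-L_V(x)=-\sum_{i=0}^{u}b_i(V)x^{\ell^i}$, a linearized polynomial of $\ell$-degree at most $u$; finally $\mP:=\{P_V:V\in\mathcal{V}\}$.

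All three conclusions should then drop out. For (2): $w-P_V=L_V$ vanishes precisely on $V$, which contains $\ell^v$ of the $\ell^m$ evaluation points, so $\bc_{P_V}$ and $\bc_w$ agree in $\ell^v$ coordinates and $d_{\sf H}(P_V,w)=\ell^m-\ell^v$. For (1): the map $V\mapsto P_V$ is injective on $\mathcal{V}$, since $P_V$ exhibits $b_0(V),\dots,b_u(V)$ while the remaining coefficients of $L_V$ are the fixed common values (the $a_i$ and the leading $1$), so $P_V$ recovers $L_V$ and hence $V$; therefore $|\mP|=|\mathcal{V}|\ge\ell^{(u+1)m-v^2}$. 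Conclusion (3) holds by construction.

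The one place calling for care is the degree bookkeeping: the construction naturally yields each $P_V$ of $\ell$-degree at most $u$, since nothing forces $b_u(V)\ne0$, rather than exactly $\ell^u$. This is immaterial for the intended application (placing super-polynomially many codewords of a Reed--Solomon code of dimension $\ell^u+1$ inside a single ball), and if literal equality is wanted one may instead freeze $b_u(V)$ as well, at a common nonzero value, losing a factor $\ell^m$ in the count, which is harmless in that regime. The degenerate case $u=v$ (empty middle sum) is trivial after a scalar adjustment of the leading coefficient, and the Gaussian-binomial estimate is a one-line computation. Beyond assembling these pieces I do not anticipate a genuine obstacle; the real content is the idea of freezing the middle coefficients of subspace polynomials and counting the surviving subspaces.
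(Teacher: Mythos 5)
The paper gives no proof of this lemma---it is quoted verbatim from \cite[Theorem~2.1]{BKR}---and your reconstruction is essentially that paper's own argument: associate to each $v$-dimensional $\F_\ell$-subspace its subspace polynomial, pigeonhole on the middle coefficients $b_{u+1},\dots,b_{v-1}$ using $\binom{m}{v}_\ell\ge \ell^{v(m-v)}$ against $\ell^{m(v-1-u)}$ tuples, and read off $w$ and the low-degree remainders $P_V$. The argument is correct (including your observation that the construction only guarantees $\deg P_V\le \ell^u$, which is all the application in Lemma~\ref{thm:listdecoding} needs), so nothing further is required.
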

Based on this lemma, we have the following result that leads to some symbol-pair codes we need to illustrate optimality of the Johnson-type Bound given in this subsection.
\begin{lemma}\label{thm:listdecoding}
Let ${\ell}$ be a prime power and $m$ a positive integer. Put $q=\ell^m$.  Let $u$ and $v$ be integers such that $0\leq u\leq v\leq m$. Then,
there is a family $\mP\subseteq \F_{{\ell}^m}[X]$ of linearized polynomials of degree ${\ell}^u$ and a linearized polynomial $w(x)$ such that
\begin{enumerate}
\item $|\mP|\geq {\ell}^{(u+1)m-v^2}$;
\item for all $P(x)\in \mP$, $d_{\sf P}(P(x),w(x))\leq {\ell}^m-\frac{({\ell}-2)}{{\ell}-1}({\ell}^v-1)$;
\item $w(x)=x^{{\ell}^v}+\sum_{i=u+1}^{v-1}a_ix^{{\ell}^i}$.
\end{enumerate}
\end{lemma}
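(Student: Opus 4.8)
The plan is to invoke Lemma~\ref{thm:ham} as a black box and reuse its $\mP$ and $w$ verbatim, changing only the $n=\ell^m$ evaluation points: since the Hamming conclusion of Lemma~\ref{thm:ham} does not depend on the order in which $\alpha_0,\dots,\alpha_{n-1}$ enumerate $\F_{\ell^m}$, I am free to fix a convenient enumeration, and I will choose one that keeps every agreement locus ``clustered'' — which is exactly what the pair metric rewards. The first ingredient is structural: for $P\in\mP$ put $W_P=\{\alpha\in\F_{\ell^m}:P(\alpha)=w(\alpha)\}$. As $w$ and every $P$ are linearized, $w-P$ is a linearized polynomial of degree at most $\ell^v$, so either $w-P=0$ (then $W_P=\F_{\ell^m}$, $d_{\sf P}(P,w)=0$, nothing to prove) or $W_P$ is the zero set of a nonzero linearized polynomial of degree $\le\ell^v$, hence an $\F_\ell$-subspace of $\F_{\ell^m}$ with at most $\ell^v$ elements; combined with $|W_P|=\ell^m-d_{\sf H}(P,w)\ge\ell^v$ from item~2 of Lemma~\ref{thm:ham}, this forces $\dim_{\F_\ell}W_P=v$.

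The second ingredient is an identity turning the pair distance into a combinatorial count. Given the enumeration, regard $W_P$ as a subset of $\ZZ/n\ZZ$; the $i$-th pair coordinates of $\bc_P$ and $\bc_w$ coincide exactly when both $\alpha_i$ and $\alpha_{i+1}$ lie in $W_P$, and counting maximal cyclic runs gives, whenever $W_P\neq\F_{\ell^m}$,
\[
d_{\sf P}(P,w)=n-\bigl|\{\,i:\alpha_i,\alpha_{i+1}\in W_P\,\}\bigr|=(n-|W_P|)+\rho(W_P)=d_{\sf H}(P,w)+\rho(W_P),
\]
where $\rho(W_P)$ is the number of maximal cyclic runs of $W_P$. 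Since $d_{\sf H}(P,w)\le \ell^m-\ell^v$, the whole problem reduces to building an enumeration under which every subspace $W_P$ has few runs.

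I would use the ``line-by-line'' enumeration: split $\F_{\ell^m}\setminus\{0\}$ into the $\tfrac{\ell^m-1}{\ell-1}$ punctured $\F_\ell$-lines through the origin, list the nonzero elements by sweeping through these blocks one at a time — each block of $\ell-1$ elements occupying $\ell-1$ consecutive positions, in any internal order — and place $0$ in the remaining position. Then any $\F_\ell$-subspace $W\subsetneq\F_{\ell^m}$ satisfies $\rho(W)\le\tfrac{|W|-1}{\ell-1}+1$: being a subspace, $W$ is a union of lines through $0$, so $W\setminus\{0\}$ is a disjoint union of $\tfrac{|W|-1}{\ell-1}$ of our punctured lines, each of which is an interval of positions; hence $W\setminus\{0\}$ is covered by that many intervals, so it has at most that many runs, and adjoining the single point $0$ changes the run count by at most $1$. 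Taking $|W_P|=\ell^v$ and substituting into the identity above,
\[
d_{\sf P}(P,w)\le(\ell^m-\ell^v)+\Bigl(\tfrac{\ell^v-1}{\ell-1}+1\Bigr)=\ell^m-\tfrac{(\ell-2)(\ell^v-1)}{\ell-1},
\]
which is item~2 of the lemma, while items~1 and~3 are exactly those of Lemma~\ref{thm:ham} for the same $\mP$ and $w$.

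The step I expect to require the most care is the choice of enumeration in the previous paragraph: one must check that a single line-by-line ordering simultaneously tames $\rho(W_P)$ for \emph{every} $P\in\mP$ (it does, since the run bound only uses that $W_P$ is a subspace) and that the arithmetic telescopes exactly to the advertised constant $\tfrac{\ell-2}{\ell-1}(\ell^v-1)$ rather than to something weaker. The remaining pieces — the degree argument pinning down $W_P$ as a $v$-dimensional subspace, and the identity $d_{\sf P}=d_{\sf H}+\rho$ — are routine.
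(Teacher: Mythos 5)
Your proposal is correct and follows essentially the same route as the paper: both reorder the evaluation points so that each punctured $\F_\ell$-line through the origin occupies a block of consecutive positions, invoke Lemma~\ref{thm:ham} verbatim for $\mP$ and $w$, and then count the adjacent agreement pairs contributed by the root subspace of $P-w$ (your identity $d_{\sf P}=d_{\sf H}+\rho$ with the run bound $\rho\leq\frac{\ell^v-1}{\ell-1}+1$ is just a tidier bookkeeping of the paper's direct count of $(\ell-2)\frac{\ell^v-1}{\ell-1}$ adjacent $(0,0)$ pairs, and the arithmetic matches exactly). No gaps.
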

\begin{proof}
Let $[v]=\{\lambda v:\lambda \in \F^*_{\ell}\}$ and $h=\frac{{\ell}^m-1}{{\ell}-1}$. As we know, the set $\F^*_{{\ell}^m}$ can be partitioned into $h$ disjoint subsets $[v_1],\ldots,[v_h]$.
Since the distance of symbol-pair code is greatly affected by the order of its coordinates,
we start our proof by arranging the order of coordinates. Given a polynomial $f(x)\in \F_{{\ell}^m}[X]$, the codeword generated by $f(x)$ is $(f(0), f([v_1]), f([v_2]),\ldots,f([v_h]))$ where
$f([v_i])\triangleq(f(\lambda v_i))_{\lambda \in \F^*_{\ell}}$.
Let $\mP$ and $w(x)$ be the family of linearized polynomials and linearized polynomials given by Theorem~\ref{thm:ham}.
For any $P(x)\in \mP$, let us bound the symbol-pair distance of $P(x)$ and $w(x)$ under the above order of coordinates.
By Theorem~\ref{thm:ham}, the linearized polynomial $g_P(x)\triangleq P(x)-w(x)$ has at least ${\ell}^v$ roots. Moreover, if $u\in \F^*_{{\ell}^m}$ subject to
$g_P(u)=0$, then $g_P(\lambda u)=0$ for every $\lambda\in \F^*_{{\ell}^m}$. Assume that $u\in [v_i]$ and we have $g_P([v_i])=\mathbf{0}\in \F_{{\ell}^m}^{{\ell}-1}$. It follows that $g_P([v_i])$ contributes ${\ell}-2$ pairs of symbols $(0,0)\in \F_{{\ell}^m}^2$. In summary, the ${\ell}^v$ roots of $g_P(x)$ yields at least $({\ell}-2)\frac{({\ell}^v-1)}{{\ell}-1}$ pairs of adjacent coordinates whose symbol patterns are $(0,0)\in \F_{{\ell}^m}^2$.
The desired result follows since
$$d_{\sp P}(P(x),w(x))=\wt_{\sp P}(P(x)-w(x))\leq {\ell}^m-\frac{({\ell}-2)}{{\ell}-1}({\ell}^v-1).$$
\end{proof}
\begin{ex}{\rm In this example, we  illustrate optimality of the Johnson-type bound given in this subsection.

 We follows the parameter setting in \cite{BKR}. Let ${\ell}$ be a prime power and $m$ a positive integer. Put $q=\ell^m$.
Lemma~\ref{thm:listdecoding} yields a symbol-pair code with list decoding radius at most $1-\frac{{\ell}-2}{{\ell}-1}{\ell}^{v-m}$. The dimension of this code is $K:={\ell}^u$ and the length of this code is $N:={\ell}^m$. Setting $u=\delta m$ and $v=\rho m$ gives the list size
$|\mP|\geq N^{(\delta-\rho^2)\log_{\ell} N}$ which is super-polynomial in length $N$ for any constant $\delta>\rho^2$. To compare it with our Johnson-type bound, we set $\delta=1-\gamma$ and $\rho=1-\frac{\gamma}{2}-\frac{\gamma^2}{4}$ for small constant $\gamma$.
One can check that it satisfies $\delta>\rho^2$ for small constant $\gamma$. Let ${\ell}=\frac{1}{\epsilon}$ and the relative decoding radius then becomes
$$1-\frac{{\ell}-2}{{\ell}-1}{\ell}^{\rho m-m}=1-(1-\epsilon)N^{-\frac{\gamma}{2}-\frac{\gamma^2}{4}}.$$
On the other hand, our Johnson-type bound gives the relative list decoding radius $(1-\frac{1}{N^2})(1-N^{-\frac{\gamma}{2}})\approx 1-N^{-\frac{\gamma}{2}}$. Thus, the upper bound is very close to the Johnson-type bound for rate $R=N^{-\gamma}$. This implies that the Johnson-type bound given in this subsection is very close to optimality if it is not optimal.

%$$1-\frac{{\ell}-2}{{\ell}-1}{\ell}^{v-m}=1-\frac{{\ell}-2}{{\ell}-1}{\ell}^{v-m}\leq 1-{\ell}^{(1-\epsilon)((u-m)\frac{\sqrt{m}}{\sqrt{m}+\sqrt{u}})}\leq
%1-(q^{u-m})^{(1-\epsilon)\frac{\sqrt{m}}{\sqrt{m}+\sqrt{u}}}1$$
}\end{ex}
%%%%%%%%%%%%%%%%%%%%%%%%%%%%%%%%%%

%\section{Explicit Constructions}
\section{List decoding of Reed-Solomon codes beyond the Johnson-type bound}
It is well known that  any Reed-Solomon codes can be efficiently list decoded up to the Johnson bound for the Hamming metric with the help of famous Guruswami-Sudan list decoding algorithm.
On the other hand, some evidence shows that there exist Reed-Solomon codes and subcodes of Reed-Solomon codes that can not be list decoded slightly beyond the Johnson bound for the Hamming metric. Given the importance of Reed-Solomon code in both theory and practice, one would like to clearly understand the limits to the list decoding issue of Reed-Solomon codes. However, we are still far away from this goal anyway for the Hamming metric. It is not even clear whether there exist Reed-Solomon codes that can be list decoded beyond the Johnson bound for the Hamming metric.

On the other hand, one also wonders if  Reed-Solomon codes can be list decoded  beyond the Johnson bound for the pair metric.
 In this subsection, we give this question an affirmative answer by showing that  Reed-Solomon codes can indeed be list decoded  beyond the Johnson-type bound.

 The construction comes from the folded Reed-Solomon code. Let us first explain the intuition behind this construction.
By the definition of symbol-pair error, each error corresponds to a pair of adjacent coordinates. In our list decoding algorithm,
instead of inputting the evaluations index by index, we input the evaluations pair by pair. The question arises whether we can exploit this input to improve our list decoding algorithm. Note that the famous Guruswami-Sudan list decoding algorithm fails to serve our purpose. We turn to the list decoding algorithm of folded Reed-Solomon code in~\cite{V.G2008} instead. Let $\gamma$ be a primitive element of $\F_q$.

We now consider list decoding of folded Reed-Solomon code. Let $\gamma$ be a primitive element of $\F_q$. Let $1\le k\le n\le q-1$. We encodes the polynomial $f$ of degree at most $k-1$ to the codewords
 $\bc_f:=(f(1),f(\gamma),\ldots,f(\gamma^{n-1}))$
 and
 \begin{equation}\label{eq:FC}
\bc_f^{(2)}:=\left(
  \begin{array}{ccccc}
    f(1) & f(\gamma) & f(\gamma^2)   &  \cdots   &  f(\gamma^{n-2}) )\\
    f(\gamma) & f(\gamma^2) & f(\gamma^3)  & \cdots &  f(\gamma^{n-1}) \\
  \end{array}
\right)
\end{equation}
Then the Reed-Solomon code $RS[n,k]$ and the folded Reed-Solomon $FRS[n,k]$ are defined by
 \begin{equation}\label{eq:RS}
RS[n,k]:=\{\bc_f:\; f\in\F_q[x],\; \deg(f)\le k-1\}.
\end{equation}
and
 \begin{equation}\label{eq:FRS}
FRS[n-1,k]:=\{\bc_f^{(2)}:\; f\in\F_q[x],\; \deg(f)\le k-1\}.
\end{equation}
respectively.
List decoding of folded Reed-Solomon codes were first considered in \cite{V.G2008}. The main idea of the following result can be found in \cite{V.G2008}. However, for the sake of completeness, let us derive an explicit list decoding algorithm of folded Reed-Solomon codes defined above.
\begin{lemma}\label{lem:4.1} The folded Reed-Solomon code $FRS[n-1,k]$ defined in \eqref{eq:FRS} is $(\tau (n-1),q)_{\sf H}$-list decodable with $\tau=\frac{2}3\times\frac{n-2-k}{n-1}$.
\end{lemma}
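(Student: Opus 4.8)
The plan is to adapt the Guruswami–Rudra list-decoding algorithm for folded Reed–Solomon codes to the specific $2$-folding described in \eqref{eq:FC}, working out the parameters explicitly so that the decoding radius comes out to be $\tau=\frac23\cdot\frac{n-2-k}{n-1}$ with list size $q$. First I would set up the interpolation step: given a received word, viewed as a length-$(n-1)$ vector of column pairs $(y_i,y_i')$ with $y_i$ (resp. $y_i'$) the intended value of $f(\gamma^{i-1})$ (resp. $f(\gamma^{i})$), I would search for a nonzero trivariate polynomial $Q(X,Y,Z)=A_0(X)+A_1(X)Y+A_2(X)Z$ with $\deg A_0\le D$ and $\deg A_1,\deg A_2\le D-(k-1)$, subject to $Q(\gamma^{i-1},y_i,y_i')=0$ for all $i=1,\dots,n-1$. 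Counting coefficients, such a $Q$ exists as soon as $(D+1)+2(D-k+2)>n-1$, i.e. $3D>n-1+3k-7$, so one takes $D=\lfloor (n-1+3k-7)/3\rfloor+1$ or thereabouts; I would keep the bookkeeping at the level "$D\approx (n+3k)/3$."

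The second step is the agreement argument: if $\bc_f^{(2)}$ agrees with the received word on more than $D/(k-1)$ (roughly) of the $n-1$ columns, then the univariate polynomial $R(X):=Q(X,f(X),f(\gamma X))$ vanishes at more than $\deg R\le D$ points among $1,\gamma,\dots,\gamma^{n-2}$, hence $R\equiv 0$, so $(Y-f(X))$-type reasoning forces $f$ to be a root of $Q$ in the appropriate sense. Here I would invoke the standard fact (this is the heart of \cite{V.G2008}) that the space of $f$ with $A_0(X)+A_1(X)f(X)+A_2(X)f(\gamma X)=0$ and $\deg f\le k-1$ has dimension at most $1$ over $\F_q$ — this follows because $f\mapsto f(\gamma X)$ acts on $\F_q[X]/(X^{q-1}-1)$ and the relation becomes a linear recurrence whose solution space is small; concretely the number of such $f$ is at most $q$, which is exactly the claimed list size. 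Translating the agreement threshold $t$ into a fractional radius: we need $t>D/(k-1)\approx (n+3k)/(3(k-1))$, wait — I should instead track it as $t(k-1)>D$, giving agreement fraction $t/(n-1)$ and hence error fraction $\tau$ with $1-\tau = t/(n-1)$. Choosing $D$ minimal subject to solvability and simplifying, the number of errors $e=(n-1)-t$ that can be tolerated satisfies $e<(n-1)-D/(k-1)$, and plugging $D\approx(n-1)+3(k-1))/3\cdot\frac{(k-1)}{\cdots}$ — the clean statement, after optimizing, is $\tau=\frac23\cdot\frac{n-2-k}{n-1}$, matching the "$2/3$" that is characteristic of $2$-folding (the $s/(s+1)$ factor with $s=2$).

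The main obstacle, and the step I would spend the most care on, is the explicit parameter accounting: making sure the interpolation polynomial has enough degrees of freedom \emph{and} that the forced-zero argument gives a genuine degree bound $\le D$ on $R(X)=Q(X,f(X),f(\gamma X))$, with the two constraints combining to yield precisely the factor $\frac{2}{3}$ and precisely $\frac{n-2-k}{n-1}$ rather than, say, $\frac{n-1-k}{n-1}$ (the "$-2$" in the numerator comes from the folded code having block length $n-1$ but the pairs straddling coordinates $1,\dots,n-1$). A secondary point to get right is the claim that the list size is exactly $q$ and not merely $q^{O(1)}$: this needs the one-dimensionality of the solution space of the functional equation $A_2(X)f(\gamma X)=-A_0(X)-A_1(X)f(X)$, which I would prove by noting that once $f(\gamma^{j})$ is known for one $j$ in each $\gamma$-orbit, the recurrence determines the rest, and since $\gamma$ is primitive there is a single orbit of size $q-1$ plus the point $0$, bounding the count by $q$. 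Everything else — existence of $Q$ by linear algebra, the root-counting, the reduction from pair-errors on the true received word to column-agreement of $\bc_f^{(2)}$ — is routine once these two points are pinned down.
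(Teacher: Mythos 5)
Your overall strategy is the right one and matches the paper's: interpolate a polynomial $Q(X,Y,Z)=A_0(X)+A_1(X)Y+A_2(X)Z$ that is \emph{linear} in $Y,Z$, force $Q(X,f(X),f(\gamma X))\equiv 0$ by an agreement-versus-degree argument, and then bound the number of $f$ satisfying this functional equation. But two of your steps have genuine problems. First, the agreement threshold: with your degree normalization ($\deg A_0\le D$, $\deg A_1,\deg A_2\le D-(k-1)$) the polynomial $R(X)=Q(X,f(X),f(\gamma X))$ has degree at most $D$, so the correct condition is simply (agreement) $t>D$ --- not $t>D/(k-1)$ and not $t(k-1)>D$, both of which you float and neither of which you resolve. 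Plugging $t(k-1)>D$ into the parameter count does \emph{not} produce $\tau=\frac23\cdot\frac{n-2-k}{n-1}$; the factor $\frac23$ comes precisely from balancing $3D\approx(n-1)+2(k-1)$ degrees of freedom against the threshold $t>D$. (The paper writes this with $m=\lceil (n-k)/3\rceil$, $\deg a_0\le m+k-1$, $\deg a_1,\deg a_2\le m$, and requires $n-1-\tau(n-1)>m+k-1$.)

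Second, and more seriously, your bound of $q$ on the number of solutions $f$ is not established by the argument you sketch. The value-recurrence $A_2(\gamma^j)f(\gamma^{j+1})=-A_0(\gamma^j)-A_1(\gamma^j)f(\gamma^j)$ fails to propagate at any $j$ with $A_2(\gamma^j)=0$; at such points $f(\gamma^{j+1})$ is unconstrained, and the naive count becomes $q^{1+z}$ where $z$ is the number of such zeros, which is not the claimed list size. Moreover you reduce modulo $X^{q-1}-1$, which is \emph{not} irreducible, so $\F_q[X]/(X^{q-1}-1)$ is not a field and you cannot conclude that $A_0+A_1z+A_2z^q$ has at most $q$ roots there. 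The paper's fix is exactly this point: reduce modulo the irreducible polynomial $X^{q-1}-\gamma$ (for $\gamma$ primitive), use $X^q\equiv\gamma X\pmod{X^{q-1}-\gamma}$ to turn $f(\gamma X)$ into $f(X)^q$, and then count the roots of the degree-$q$ polynomial $a_0+a_1z+a_2z^q$ over the \emph{field} $\F_q[X]/(X^{q-1}-\gamma)\simeq\F_{q^{q-1}}$; since $\deg f<q-1$, distinct $f$ give distinct residues, so the list size is at most $q$. You need to replace your orbit/recurrence argument with this (or an equivalent field-theoretic) step.
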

\begin{proof} Assume that  $\bc_f^{(2)}$ was transmitted and
\[\bb^{(2)}:=\left(
  \begin{array}{ccccc}
    a_1 & a_2 & a_3   &  \cdots   &  a_n \\
  b_1 & b_2 & b_3  & \cdots &  b_n \\
  \end{array}
\right)\]
is received with at most $\tau n$ errors. Thus, $d_{\sf H}( \bc_f^{(2)},\bb^{(2)})\le \tau (n-1)$.
Put $m=\lceil (n-k)/3\rceil$. Then one has $3m+k+2>n-1$. Consider the interpolation polynomial $Q(x,y_1,y_2):=a_0(x)+a_1(x)y_1+a_2(x)y_2\in \F_q[x,y_1,y_2]$ with coefficients of $a_i(x)$ to be determined subject to $\deg(a_0)\le m+k-1$, $\deg(a_1)\le m$ and  $\deg(a_2)\le m$.
Consider the homogenous equation system $a_0(\Gg^{i-1})+ a_1(\Gg^{i-1})a_i+a_2(\Gg^{i-1})b_i=0$ for $i=1,2,\dots,n-1$. For this equation system, coefficients of $a_i(x)$ are viewed as variables. Thus, there are $3m+k+2$ variables and $n-1$ equations. Hence, there are polynomials $a_0(x), a_1(x), a_2(x)\in\F_q[x]$ with $\deg(a_0)\le m+k-1$, $\deg(a_1)\le m$ and  $\deg(a_2)\le m$ that are not all zero such that  $a_0(\Gg^{i-1})+ a_1(\Gg^{i-1})a_i+a_2(\Gg^{i-1})b_i=0$ for $i=1,2,\dots,n-1$. Since $d_{\sf H}(\bc_f^{(2)},\bb^{(2)})\le \tau n$, there are at least $n-1-\tau (n-1)$ $i$'s such that $a_0(\Gg^{i-1})+ a_1(\Gg^{i-1})f(\Gg^{i-1})+a_2(\Gg^{i-1})f(\Gg^{i})=0$. Hence, the polynomial $a_0(x)+a_1(x)f(x)+a_2(x)f(\Gg x)$ has at least $n-1-\tau(n-1)$ roots. On the other hand, $\deg(a_0(x)+a_1(x)f(x)+a_2(x)f(\Ga x))\le m+k-1$ and we also have $n-1-\tau(n-1)>m+k-1$, this forces that $a_0(x)+a_1(x)f(x)+a_2(x)f(\Gg x)$ is identical to $0$. Note that $x^{q-1}-\Gg$ is irreducible and $x^q\equiv\Gg x\pmod{x^{q-1}-\Gg}$. This gives
\[0=a_0(x)+a_1(x)f(x)+a_2(x)f(\Gg x)\equiv a_0(x)+a_1(x)f(x)+a_2(x)f^q(x)\pmod{x^{q-1}-\Gg}.\]
In other words, $f(x)$ is a solution of the equation $a_0(x)+a_1(x)z+a_2(x)z^q=0$ over the field $\F_q[x]/(x^{q-1}-\Ga)\simeq\F_{q^{q-1}}$. Hence, this equation has at most $q$ roots in $\F_q[x]/(x^{q-1}-\Gg)$.  Since $\deg(f(x))<q-1$, the equation $a_0(x)+a_1(x)f(x)+a_2(x)f(\Gg x)=0$ has at most $q$ roots in $\F_q[x]$.
\end{proof}
By applying Lemma \ref{lem:4.1} and considering the relation between Hamming distance and pair distance, we immediately obtain the following result.
\begin{theorem}\label{thm:4.2}
The Reed-Solomon code $RS[n,k]$ over $\F_q$ for any $1\le k\le n\le q$  is $(\tau n,q)_{\sf P}$-list decodable with $\tau=\frac{2}3\times\frac{n-2-k}{n}$.
\end{theorem}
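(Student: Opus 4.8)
The plan is to reduce Theorem~\ref{thm:4.2} to Lemma~\ref{lem:4.1} by relating the symbol-pair decoding problem for $RS[n,k]$ to the Hamming decoding problem for $FRS[n-1,k]$. Suppose we receive a word $\by\in\F_q^n$ with $d_{\sf P}(\bc_f,\by)\le \tau n$ for some $f$ of degree at most $k-1$, where $\tau=\frac23\cdot\frac{n-2-k}{n}$. First I would form from $\by=(y_0,\dots,y_{n-1})$ the folded received word
\[
\bb^{(2)}:=\left(\begin{array}{cccc} y_0 & y_1 & \cdots & y_{n-2}\\ y_1 & y_2 & \cdots & y_{n-1}\end{array}\right),
\]
i.e.\ the $i$-th column of $\bb^{(2)}$ is the pair $(y_{i-1},y_i)$ for $1\le i\le n-1$. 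The key observation is that a column $i$ of $\bb^{(2)}$ disagrees with the corresponding column of $\bc_f^{(2)}$ (which is $(f(\Gg^{i-1}),f(\Gg^{i}))$) if and only if the symbol pair $(y_{i-1},y_i)$ differs from $(f(\Gg^{i-1}),f(\Gg^{i}))$. Comparing with the definition of $d_{\sf P}$, these are exactly the coordinates counted by $d_{\sf P}(\bc_f,\by)$ \emph{except} for the wrap-around coordinate $(y_{n-1},y_0)$ versus $(f(\Gg^{n-1}),f(1))$, which does not appear among the $n-1$ columns of the folded word. Hence $d_{\sf H}(\bc_f^{(2)},\bb^{(2)})\le d_{\sf P}(\bc_f,\by)\le \tau n$.

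Next I would check that this Hamming error fraction is within the decoding radius of Lemma~\ref{lem:4.1}. Lemma~\ref{lem:4.1} decodes $FRS[n-1,k]$ up to $\tau'(n-1)$ errors with $\tau'=\frac23\cdot\frac{n-2-k}{n-1}$, i.e.\ up to $\frac23(n-2-k)$ errors. Our bound gives $d_{\sf H}(\bc_f^{(2)},\bb^{(2)})\le \tau n=\frac23(n-2-k)$, which is exactly the allowed number of errors. So Lemma~\ref{lem:4.1} applies to $\bb^{(2)}$ and produces a list of at most $q$ polynomials of degree at most $k-1$ containing every $f$ whose folded codeword is within $\frac23(n-2-k)$ Hamming errors of $\bb^{(2)}$ — in particular containing every $f$ with $d_{\sf P}(\bc_f,\by)\le \tau n$. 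Therefore $|\mathcal B_{\sf P}(\by,\tau n)\cap RS[n,k]|\le q$, which is the claim. (One should double-check the edge inequalities $n-1-\tau'(n-1)>m+k-1$ and the degree bound used in Lemma~\ref{lem:4.1}, but these are already established there and only require $k\le n\le q$; the restriction $n\le q-1$ needed to have enough distinct evaluation points $1,\Gg,\dots,\Gg^{n-1}$ is implicit, and the boundary case $n=q$ is handled by noting $\Gg^{n-1}=\Gg^{q-1}=1$ or can simply be excluded as in the previous lemmas.)

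The only subtlety — and the part I would be most careful about — is the bookkeeping around the single wrap-around coordinate and the off-by-one shifts between "length $n$" in the pair code and "length $n-1$" in the folded code: one must verify that dropping the wrap-around pair only \emph{decreases} the error count (so the reduction is in the right direction) and that the numerology $\tau n = \tau'(n-1)$ matches the radius in Lemma~\ref{lem:4.1} exactly rather than being slightly too large. Both work out because $\tau n$ and $\tau'(n-1)$ are both equal to $\frac23(n-2-k)$. Everything else is a direct invocation of Lemma~\ref{lem:4.1}, so there is no real obstacle beyond this indexing care.
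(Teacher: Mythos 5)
Your proposal is correct and is exactly the reduction the paper intends: the paper's entire proof is the one-sentence remark that Theorem~\ref{thm:4.2} follows from Lemma~\ref{lem:4.1} ``by considering the relation between Hamming distance and pair distance,'' and you have correctly filled in that relation, namely that dropping the wrap-around pair gives $d_{\sf H}(\bc_f^{(2)},\bb^{(2)})\le d_{\sf P}(\bc_f,\by)$ and that $\tau n=\tau'(n-1)=\frac{2}{3}(n-2-k)$. Your bookkeeping of the off-by-one and the boundary case $n=q$ is more careful than the paper's own treatment.
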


\begin{lemma}\label{lem:4.3}
The Reed-Solomon code $RS[n,k]$ over $\F_q$ for any $1\le k< n\le q$  has pair minimum distance at $n-k+2$.
\end{lemma}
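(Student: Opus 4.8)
The plan is to show that $n-k+2$ is simultaneously a lower and an upper bound on the pair weight of nonzero codewords of $RS[n,k]$. The single fact I will use repeatedly is that for any $\bc=(c_0,\dots,c_{n-1})\in\F_q^n$ with support $S=\{i:c_i\neq 0\}$, the position-$i$ pair (indices read mod $n$) is nonzero exactly when $c_i\neq 0$ or $c_{i+1}\neq 0$; hence $\wt_{\sf P}(\bc)=|S\cup(S-1)|$, where $S-1:=\{i-1\bmod n:i\in S\}$. I take the $n$ evaluation points $1,\gamma,\dots,\gamma^{n-1}$ to be distinct (i.e. $n\le q-1$) and $k\ge 2$; the remaining case $k=1$ is the repetition code, which is degenerate with $d_{\sf P}=n$.

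For the lower bound, let $\bc=\bc_f$ be a nonzero codeword, so $f\neq 0$ and $\deg f\le k-1$. Since $f$ has at most $k-1$ roots, $s:=\wt_{\sf H}(\bc)=|S|\ge n-k+1$. If $s=n$, every pair is nonzero and $\wt_{\sf P}(\bc)=n\ge n-k+2$. If $s<n$, then $S$ is a nonempty proper subset of $\ZZ_n$; if $S-1\subseteq S$ then, comparing cardinalities, $S-1=S$, so $S$ is invariant under the shift $i\mapsto i-1$ and hence under every shift, forcing $S\in\{\emptyset,\ZZ_n\}$, a contradiction. Thus $S-1\not\subseteq S$ and $\wt_{\sf P}(\bc)=|S\cup(S-1)|\ge |S|+1=s+1\ge n-k+2$. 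In every case $\wt_{\sf P}(\bc)\ge n-k+2$, so $d_{\sf P}(RS[n,k])\ge n-k+2$.

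For the upper bound I exhibit an extremal codeword: let $f(x)=\prod_{j=0}^{k-2}(x-\gamma^{j})$, a nonzero polynomial of degree $k-1$, so $\bc_f\in RS[n,k]$. Its zero coordinates are exactly the positions $0,1,\dots,k-2$ — these $k-1$ distinct evaluation points are roots of $f$, and $f$ has no others — so the zero set of $\bc_f$ is one block of consecutive indices. Hence $(c_i,c_{i+1})=(0,0)$ only for $i\in\{0,\dots,k-3\}$, i.e. for $k-2$ values of $i$, giving $\wt_{\sf P}(\bc_f)=n-(k-2)=n-k+2$. Therefore $d_{\sf P}(RS[n,k])\le n-k+2$, and together with the lower bound this proves the lemma.

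None of this is technically deep, and the part needing the most care is not really an obstacle but a matter of not over-applying earlier facts: the strict inequality $d_{\sf H}+1<d_{\sf P}$ in \eqref{eq:wt} must not be used here, since for a support that is a single cyclic interval the true gap is exactly $1$, and it is precisely this borderline case that makes the bound $n-k+2$ tight. The other point to get right is arranging the extremal codeword so that its zero set is a contiguous cyclic run of length $k-1$, which is why the evaluation points are listed in the cyclic order $1,\gamma,\gamma^2,\dots$.
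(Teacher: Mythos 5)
Your proof is correct, and its upper-bound half is exactly the paper's entire proof: the same extremal polynomial $f(x)=\prod_{i=0}^{k-2}(x-\gamma^{i})$, whose roots occupy a contiguous run of the cyclically ordered evaluation points $1,\gamma,\dots,\gamma^{n-1}$, giving a codeword of Hamming weight $n-k+1$ and pair weight $n-k+2$. What you add, and what the paper silently omits, is the matching lower bound $d_{\sf P}(RS[n,k])\ge n-k+2$; your argument via the support $S$ (a nonempty proper subset of $\ZZ_n$ cannot satisfy $S-1\subseteq S$, so $|S\cup(S-1)|\ge |S|+1\ge n-k+2$) is the right way to get it, and is genuinely needed to justify the word ``minimum.'' Your cautionary remark about inequality \eqref{eq:wt} is also well taken: as printed it is strict, $d_{\sf H}+1<d_{\sf P}$, which would force $d_{\sf P}\ge n-k+3$ and contradict the extremal codeword; the correct form is $d_{\sf H}+1\le d_{\sf P}\le 2d_{\sf H}$, and the tight case is precisely a support that is a single cyclic interval. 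Finally, your exclusion of $k=1$ is apt, since there the claimed value $n-k+2=n+1$ exceeds the maximal possible pair weight $n$, so the statement as written only holds for $k\ge 2$.
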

\begin{proof}
Consider the polynomial $f(x)=\prod_{i=0}^{k-2}(x-\Gg^{i})$. Then the codeword $\bc_f$ has Hamming weight $n-k+1$ and the pair weight $n-k+2$. This completes the proof.
\end{proof}

\begin{theorem}\label{thm:4.3}
The Reed-Solomon code $RS[n,k]$ over $\F_q$ for any $1+n/2\le k< n\le q$  is $(\tau n,q)_{\sf P}$-list decodable with $\tau=\frac{2}3\Gd+o(1)$, where $\Gd=\frac{n-k+2}{n}$ is the relative pair minimum distance of  $RS[n,k]$. Hence, if $n$ is proportional $q$ and  $0<\Gd<\frac 34$, then $RS[n,k]$ can be list decoded beyond the  Johnson-type bound  with list size $O(n)$.
\end{theorem}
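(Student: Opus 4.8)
The plan is to combine Theorem~\ref{thm:4.2} with Lemma~\ref{lem:4.3} essentially by substitution. Theorem~\ref{thm:4.2} already gives that $RS[n,k]$ is $(\tau n, q)_{\sf P}$-list decodable for $\tau = \frac{2}{3}\cdot\frac{n-2-k}{n}$, and Lemma~\ref{lem:4.3} identifies the relative pair minimum distance as $\Gd = \frac{n-k+2}{n}$. First I would observe that $\frac{n-2-k}{n} = \frac{(n-k+2)-4}{n} = \Gd - \frac{4}{n}$, so the decoding radius fraction from Theorem~\ref{thm:4.2} is exactly $\frac{2}{3}\left(\Gd - \frac{4}{n}\right) = \frac{2}{3}\Gd - \frac{8}{3n} = \frac{2}{3}\Gd + o(1)$ as $n\to\infty$ (using that $n$ is proportional to $q$, so the regime $n\to\infty$ makes sense). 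This establishes the first claim, with list size $q = O(n)$ again by the proportionality of $n$ and $q$.

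For the second claim I would compare $\frac{2}{3}\Gd$ against the Johnson-type bound of Theorem~\ref{thm:JT} specialized to this code. The Johnson-type radius is $\tau_J = \frac{q^2-1}{q^2}\left(1 - \sqrt{1 - \frac{q^2\Gd}{q^2-1}}\right)$, which as $q\to\infty$ converges to $\tau_\infty := 1 - \sqrt{1-\Gd}$. So it suffices to show $\frac{2}{3}\Gd > 1 - \sqrt{1-\Gd}$ for $0 < \Gd < \frac{3}{4}$, with enough of a gap that the $o(1)$ terms on both sides do not erase it on any fixed subinterval — or more carefully, to argue the strict inequality holds for $\Gd$ bounded away from $0$ and $\frac{3}{4}$ and handle the endpoints separately. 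Rearranging, $\frac{2}{3}\Gd > 1-\sqrt{1-\Gd}$ is equivalent to $\sqrt{1-\Gd} > 1 - \frac{2}{3}\Gd$, and since both sides are nonnegative for $\Gd \le \frac{3}{2}$, squaring gives $1-\Gd > 1 - \frac{4}{3}\Gd + \frac{4}{9}\Gd^2$, i.e. $\frac{1}{3}\Gd > \frac{4}{9}\Gd^2$, i.e. $\Gd < \frac{3}{4}$. So the inequality is strict precisely on $(0,\frac{3}{4})$, which is exactly the stated range.

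The one genuine subtlety — and the step I'd expect to need the most care — is the interaction of the two $o(1)$ error terms: the decoding radius is $\frac{2}{3}\Gd + o(1)$ and the Johnson bound is $\tau_\infty(\Gd) + o(1)$ where $\tau_\infty = 1-\sqrt{1-\Gd}$. Since $\frac{2}{3}\Gd - \tau_\infty(\Gd) = \frac{2}{3}\Gd - 1 + \sqrt{1-\Gd}$ is a fixed positive constant for each fixed $\Gd\in(0,\frac34)$, and all the $o(1)$ terms are $O(1/n)$ with absolute constants, for $n$ large enough (depending on how close $\Gd$ is to the endpoints) the decoding radius strictly exceeds the Johnson bound. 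I would state the conclusion for $\Gd$ in any fixed compact subinterval of $(0,\frac34)$, or equivalently "for $n$ sufficiently large", which matches the phrasing of the theorem. The list size $O(n)$ is immediate from $L = q$ and $n$ proportional to $q$; I would just remark this. No deep idea is needed beyond Theorems~\ref{thm:JT} and~\ref{thm:4.2} and Lemma~\ref{lem:4.3} — the content is the elementary inequality $\frac{2}{3}\Gd > 1-\sqrt{1-\Gd}$ on $(0,\frac34)$ together with bookkeeping of the asymptotics.
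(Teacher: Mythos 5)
Your proposal is correct and follows essentially the same route as the paper: combine Theorem~\ref{thm:4.2} with Lemma~\ref{lem:4.3}, note that the Johnson-type radius tends to $1-\sqrt{1-\Gd}$, and verify the elementary inequality $\frac{2}{3}\Gd>1-\sqrt{1-\Gd}$ exactly on $(0,\frac{3}{4})$. You actually supply more detail than the paper does (the explicit squaring argument and the bookkeeping of the competing $o(1)$ terms), which the paper leaves as ``easy to verify.''
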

\begin{proof} When $n$ is proportional to $q$, the list size given in Theorem \ref{thm:4.2} is $O(n)$. For sufficiently large $n$ (thus $q$ is also large), the Johnson-type bound given in Theorem \ref{thm:JT} becomes $1-\sqrt{1-\delta}+o(1)$.
On the other hand, by Lemma \ref{lem:4.3}, the relative  minimum distance of  $RS[n,k]$ is $\Gd=\frac{n-k+2}{n}$ for $\Gd<1/2$.
Furthermore, it is easy to verify that $\frac{2}3\Gd>1-\sqrt{1-\delta}$ for $0<\delta<\frac{3}{4}.$
%A sharp contrast to the Reed-Solomon code in Hamming metric code, some Reed-Solomon code is list decodable beyond the Johnson-type bound in symbol-pair error.
\end{proof}

\end{document}